\documentclass[11pt]{article}
\setlength{\pdfpagewidth}{8.5in}
\setlength{\pdfpageheight}{11in}
\setlength{\oddsidemargin}{0.25in}
\setlength{\evensidemargin}{\oddsidemargin}
\setlength{\textwidth}{6in}
\setlength{\textheight}{8in}
\setlength{\topmargin}{-0.0in}

\usepackage{amsmath}%
\usepackage{amsthm}%
\usepackage{mathptmx}%

\bibliographystyle{alpha}

\title{Recursion-Theoretic Ranking and Compression}

\date{October 4, 2016;
revised December 3, 2017}
\author{Lane A. Hemaspaandra and Daniel Rubery\thanks{Current address: 
Google Inc., Dan Rubery (drubery), 1600 Amphitheatre Parkway, Mountain View, CA 94043.}\\
        Department of Computer Science \\
        University of Rochester \\
        Rochester, NY 14627, USA}
\hyphenation{area areas}
\hyphenation{chapter}
\hyphenation{circuits circuit}
\hyphenation{crypto-complexity}
\hyphenation{edu-ca-tion edu-ca-tional}
\hyphenation{ent-schei-dungs-prob-lem}
\hyphenation{par-allel par-allelize par-allelized threshold Hem-a-spaan-dra}
\hyphenation{Fal-i-szew-ski Er-del-yi}
\hyphenation{Hemachan-dra}
\hyphenation{Hema-spaan-dra}
\hyphenation{micro-bribery micro-briberies}
\hyphenation{micro-bribe micro-bribes}
\hyphenation{Papa-di-mi-triou}
\hyphenation{polylog-a-rith-mic}
\hyphenation{rational}
\hyphenation{Sat-ter-thwaite}
\hyphenation{Schear Heribert Eduardo}
\hyphenation{theory theorist theorists theoretical theorem theorems}
\hyphenation{Thierauf}
\hyphenation{Zohar}
\hyphenation{Wata-nabe Ogi-hara Ogi-wara Mitsu-nori Sei-no-suke}

\showhyphens{}

\newcommand{\pair}[1]{\mathopen\langle{#1}\mathclose\rangle}

\newcommand{\calf}{\ensuremath{{\cal F}}}
\newcommand{\calc}{\ensuremath{{\cal C}}}
\newcommand{\sigmastar}{\ensuremath{\Sigma^\ast}}
\newcommand{\sharpp}{{\rm \#P}}
\newcommand{\condition}{\mid}
\newcommand{\deltatwozero}{\ensuremath{\Delta_2^0}}
\newcommand{\domain}{\ensuremath{\mathrm{domain}}}

\newcommand{\nextstring}{\ensuremath{\mathrm{successor}}}

\newcommand{\frec}{\ensuremath{\mathrm{F}_\mathrm{REC}}}
\newcommand{\fre}{\ensuremath{\mathrm{F_\mathrm{PR}}}}
\newcommand{\rec}{\ensuremath{\mathrm{REC}}}
\newcommand{\K}{\ensuremath{\mathrm{K}}}
\newcommand{\re}{\ensuremath{\mathrm{RE}}}
\newcommand{\core}{\ensuremath{\mathrm{coRE}}}
\newcommand{\leqlex}{\mathbin{\leq_{\mathrm{lex}}}}

\newcommand{\glex}{\mathbin{>_{\mathrm{lex}}}}
\newcommand{\llex}{\mathbin{<_{\mathrm{lex}}}}
\newcommand{\oplushat}{\mathbin{\widehat{\oplus}}}
\def\tcomp{\text{-compressible}}
\def\trank{\text{-rankable}}
\newtheorem{theorem}{Theorem}[section]
\newtheorem{definition}[theorem]{Definition}
\newtheorem{corollary}[theorem]{Corollary}
\newtheorem{proposition}[theorem]{Proposition}

\begin{document}
\sloppy

\maketitle

\begin{abstract}
  For which sets $A$ does there exist a mapping, computed by a 
  total or partial recursive function, such that the mapping, when its domain
  is restricted to $A$, is a 1-to-1, onto mapping to $\sigmastar$?  And
  for which sets $A$ does there exist such a mapping that respects
  the lexicographical ordering within $A$?  Both cases are types of perfect,
  minimal hash functions.  The complexity-theoretic versions of these
  notions are known as compression functions 
  and ranking functions.  The present paper defines and
  studies the recursion-theoretic versions of compression and ranking
  functions, and in particular studies the question of which sets have, or
  lack, such functions.  
  Thus, this is a case where, in contrast to the usual direction of 
  notion 
transferal, 
  notions from 
   complexity theory are inspiring notions, and an investigation,
  in computability theory.

We show that the rankable and compressible sets 
  broadly populate the 1-truth-table degrees, and we prove that 
  every nonempty coRE cylinder is recursively compressible.
  \end{abstract}

\section{Introduction}\label{s:introduction}
This paper studies the recursion-theoretic case of how hard it is to
squeeze the air (more concretely, the elements of its complement) out
of a set $A$.  That is, we want to, by a total recursive function or a
partial recursive function, map in a 1-to-1, onto fashion from $A$ to
$\sigmastar$.  So our function, when viewed as being restricted to the
domain $A$, is a bijection between $A$ and $\sigmastar$.  In effect, 
each string
in $A$ is given a unique ``name'' (string) from $\sigmastar$, and
every ``name'' from $\sigmastar$ is used for some string in $A$.  
As has been pointed
out for the complexity-theoretic analogue
(where we are interested not in total and partial recursive
functions, but in polynomial-time functions), such functions are 
the analogues for infinite sets of perfect (i.e., no collisions among 
elements in $A$), 
minimal (i.e., every element of $\sigmastar$ is hit by some element of $A$)
hash functions, 
and are called 
\emph{compression functions}~\cite{gol-hem-kun:j:address}.

A particularly dramatic type of such function would be one that maps
from the $i$th element of an infinite set $A$ to the $i$th element of
$\sigmastar$.  Such a function---a
\emph{ranking function}---has all the above properties and in
addition respects the (lexicographical) ordering of the elements of
$A$.  
For the case of
polynomial-time functions, 
this type of issue was first studied by 
Allender~\cite{all:thesis:invertible} 
and Goldberg and
Sipser~\cite{gol-sip:cOutByJour:compression,gol-sip:j:compression}
a quarter of a century ago.

That seminal work
of Allender, Goldberg, and 
Sipser led other researchers to bring a closer lens to
the issue of what behavior the ranking function would be required to
have on inputs that did not belong to $A$~\cite{hem-rud:j:ranking}, to
study more flexible notions such as the abovementioned compression
functions~\cite{gol-hem-kun:j:address} and what are known as
scalability~\cite{gol-hom:j:scalability} and
semi-ranking~\cite{hem-ogi-zak-zim:j:psr-weak-P-rank}, and to study
ranking of extremely simple sets
(\cite{huy:j:rank,alv-jen:j:logcount},
see also \cite{all:thesis:invertible,gol-sip:j:compression}).
Even the original paper of Goldberg and Sipser already established that 
there are P sets whose ranking function is complete 
for the counting version of NP (namely $\sharpp$), i.e., 
that quite simple sets can have quite complex ranking functions.

The present paper 
studies compression and ranking in
their recursion-theoretic analogues.  These basically are the same
problems as in the complexity-theoretic case, except instead of
studying what can (and cannot) be done by polynomial-time functions,
we study what can (and cannot) be done by total recursive functions
and partial recursive functions.
The direction of studying ranking and compression by total
recursive functions 
was 
previously 
mentioned as an open direction in the conclusions
section of~\cite{gol-hem-kun:j:address}, which 
observed without 
proof what here are 
Theorem~\ref{t:109.2}
and Corollary~\ref{c:4.1/4.2-dan}.

\emph{Why} do we study this?  After all, programmers are not clamoring
to have recursion-theoretic perfect, minimal hash functions for
infinite sets.  But our motivation is not about satisfying a
programming need.  It is about learning more about the structure of
sets, and the nature of---and in some cases the impossibility
of---compression done by total and partial recursive functions.  
In particular,
what classes of sets can we show to have, or not have, such
compression and ranking functions?

Among the results are the following.
\begin{itemize}
\item Every 1-truth-table degree except the zero degree contains both
  sets that are recursively rankable and sets that are not recursively
  rankable (Theorems~\ref{t:103.1} and~\ref{t:123.1}).
  (So some recursively rankable sets are undecidable, and some 
  even fall outside
  of the arithmetical hierarchy.)

\item Every 1-truth-table degree except the zero degree
  contains some set that is recursively compressible yet is not
  recursively rankable (Theorem~\ref{t:123.1}).

\item Every nonempty coRE cylinder is recursively compressible
(Theorem~\ref{t:dan-160522-Beta1}), and it follows that 
all coRE-complete sets 
(see Corollary~\ref{c:4.1/4.2-dan})
and all nonempty coRE index sets 
(Corollary~\ref{c:dan-160522-Beta2})
are recursively compressible
However, no RE-complete set or 
coRE-complete set is 
recursively rankable
or even partial-recursively
rankable (Corollary~\ref{c:25.3BK54} and Corollary~\ref{c:167.3-dan}).

\item There are infinite
$\deltatwozero$ sets that are not even partial-recursively 
compressible (Theorem~\ref{t:151.2(b)-dan}).

\item Although every recursively compressible RE set is recursive
  (see Theorem~\ref{t:109.2}), each infinite set in $\re - \rec$ is an
  example of a partial-recursively compressible RE set that is not
  recursive or recursively compressible 
  (Proposition~\ref{p:106.1} and
  Corollary~\ref{c:111.15}).
  So although all coRE-complete sets are
  recursively compressible, no RE-complete set is recursively
  compressible.

\end{itemize}

\section{Related Work}\label{s:related}
The most closely related papers are those mentioned in 
Section~\ref{s:introduction}.  
Given the importance to this paper of mappings that are 
onto $\sigmastar$, we 
mention also the 
line of work, dating back to Brassard, Fortune, and Hopcroft's
early paper on one-way functions~\cite{bra-for-hop:t:one-way}, 
that looks at
the complexity of inverting
functions that map onto
$\sigmastar$~\cite{bra-for-hop:t:one-way,fen-for-nai-rog:j:inverse,hem-rot-wec:j:hard-certificates,rot:thesis-habilitation:certificates}.
However, both that line and the papers mentioned 
in Section~\ref{s:introduction} are 
about complexity-theoretic functions,
while in contrast the current paper is about recursion-theoretic functions.

In fact, this paper is quite 
the reverse of the typical direction of
inspiration.  A 
large number of the core concepts of complexity theory are defined by
direct analogy with notions from recursive function theory.  As just a
few examples, NP, the polynomial 
hierarchy~\cite{mey-sto:c:reg-exp-needs-exp-space,sto:j:poly}, most of complexity
theory's reduction notions~\cite{lad-lyn-sel:j:com}, 
(complexity-theoretic) creative/simple/immune/bimmune
sets~\cite{ber:c:complete-sets,bal-sch:j:immune,hom:j:simple,tor:thesis:relativized-hierarchies,hom-maa:j:oracle-lattice}, 
and the semi-feasible sets~\cite{sel:j:ana}
are lifted quite directly from recursive function theory, with, as
needed, the appropriate, natural changes to focus on the deterministic
and nondeterministic polynomial-time realms.  The debt that complexity
theory owes to recursive function theory is huge.  

Far less
common is for notions defined in complexity to then be studied
recursion-theoretically.  However, this paper is a small example of that, 
since it is taking the line of ranking/compression work started by
Allender, 
Goldberg, and Sipser in the 1980s and asking the same type of questions
in the setting of total and partial recursive functions.

The 
notions of retraceable sets,
regressive sets,
and isolic reductions
are the closest existing concepts in recursive function 
theory to the 
notions of rankable and compressible sets.  We now 
discuss each, pointing out how the notions differ from ours.

A set $A \subseteq \sigmastar$ is called \emph{regressive} 
if there exists an enumeration (note that the definition does not require that it be a recursive enumeration) 
of $A$ without repetitions $\{a_0, a_1, a_2, ... \}$, and a partial recursive function $f$ such that:
$ f(a_{n+1}) = a_n$
and
$ f(a_0) = a_0$~\cite{dek:c:isols-regressive}.
The set $A$ is called \emph{retraceable} if it meets the 
definition of regressive 
with respect to a (not necessarily recursive) 
enumeration that follows the standard lexicographic 
order~\cite{dek-myh:j:retraceable}. 
Odifreddi~\cite{odi:b:classical-recursion-theory}
comments that there is a ``surface analogy'' that 
r.e.\ is to recursive as regressive is to retraceable.  We similarly
mention that there is a surface analogy that 
$\fre$-compressible is to $\fre$\trank~as 
regressive is to retraceable.
We claim (and it is not too hard to see; one basically checks whether 
the input is $a_0$---which will be hardcoded into the program---and 
if not tries repeatedly applying $f$ until, if 
ever, one reaches $a_0$, keeping track of how many applications 
that took)
that 
each retraceable set is $\fre$\trank\ and 
each 
infinite regressive set is $\fre$-compressible.
We further claim that 
each set that is retraceable under a recursive retracing 
function $f$ 
is $\frec$\trank~(the same approach sketched above works, 
along with observing that if at any point in the $f$
application chain starting at a string $x$
the recursive retracing function maps a string $y \neq a_0$ 
to a string 
lexicographically
greater than 
or equal to $y$, then 
our original string is definitely not in the set and our 
$\frec$-ranking function
can output any value it likes as $x$'s purported rank).
We claim that the converses of these statements fail
rather dramatically; our notions are far more general.
For example, 
there are $\frec$\trank\ sets that are not retraceable
and indeed that are not even regressive
(and recall that the definitions of retraceable and regressive 
are 
with respect to partial recursive retracing functions, so this
is a very strong type of separation).%
\footnote{Let $s_0, s_1, s_2, ...$ 
enumerate $\sigmastar$ in lexicographic order.
Let $K$ be the RE-complete set, 
$\{x \condition x \in L(M_x)\}$.
Define $A = \{ s_{3i} \condition i \geq 0 \} \cup \{s_{3i+1} \condition i \in \K \} \cup \{s_{3i+2} \condition i \in \overline{\K}\}$.
Then $A$ is easily seen to be $\frec$\trank.
Yet we claim that 
$A$ is not retraceable and indeed is not even regressive.
(And the definitions of retraceable and regressive are 
with respect to partial recursive retracing functions, so this is 
even stronger than the claims that $A$ is not retraceable or 
not regressive
via some recursive retracing function).
Why is $A$ not regressive?  
A set is said to be 
\emph{immune} (or \emph{r.e.-immune}) if it is infinite but 
contains no infinite r.e.\ (equivalently, recursive) subsets.
Every regressive set is either r.e.\ or 
immune~\cite[Prop.~II.6.8]{odi:b:classical-recursion-theory}.
Yet $A$ is not r.e.~($\overline{X}$ clearly recursive 
many-one reduces to $A$) and $A$ is not immune (due to the 
having the recursive subset 
$\{s_{3i} \condition i \geq 0 \}$).
Thus $A$ is not 
regressive.
}

The notion of rankability is, in fact, so nonrestrictive 
that, as this paper will establish, 
every 1-truth-table degree contains an $\frec$\trank\ set.
What about the retraceable sets?  They are known to populate the
truth-table degrees.\footnote{The literature reference for this is a
  bit tricky.  Odifreddi~\cite[Proposition
  II.6.13]{odi:b:classical-recursion-theory} proves the result of
  Dekker and Myhill~\cite{dek-myh:j:retraceable} that each Turing
  degree contains a retraceable set.  However, the given proof in fact
  establishes that each truth-table degree contains a retraceable set,
  and that fact clearly is known to Odifreddi since at the start of
  Exercise VI.6.16.b he quietly attributes to Proposition II.6.13 the
  fact that each truth-table degree other than the zero degree
  contains an immune, retraceable
  set~\cite[p.~600]{odi:b:classical-recursion-theory}.}  
However, we can prove that, unlike 
the $\frec$\trank\ sets,
they do not populate the 1-truth-table 
degrees.\footnote{We state that as the following theorem.
The \emph{1-truth-table upward (reducibility) cone}
of a set $L$ is 
$\{ L' \condition L \leq_{1\hbox{-}tt} L'\}$ (the term is more 
commonly used for degrees~\cite{odi:b:classical-recursion-theory} although
the difference is inconsequential).
\begin{theorem}
There is a 1-truth-table degree (indeed, there is a 1-truth-table
upward cone) that contains no retraceable set.
\end{theorem}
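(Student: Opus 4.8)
The plan is to combine a sharpening of the cited regressive-set dichotomy with an explicit construction. First I would observe that for \emph{retraceable} sets the dichotomy ``r.e.\ or immune'' tightens to: every retraceable set is recursive or immune. Indeed, let $A$ be retraceable via the lexicographic enumeration $a_0 \llex a_1 \llex \cdots$ and a partial recursive retracing function $f$, and suppose $A$ is not immune, so $A$ has an infinite recursive subset $R$. Given any $x$, effectively search $R$ for an element $r \geqlex x$ (one exists and is found since $R$ is infinite and recursive); as $r \in R \subseteq A$, say $r = a_m$, the iterates $r, f(r), f(f(r)), \dots$ are all defined, strictly lex-decreasing, and settle at the fixed point $a_0$, so this retrace halts and has returned exactly $\{a_0,\dots,a_m\} = A \cap \{y : y \leqlex a_m\}$; since $x \leqlex r = a_m$, we may now decide $x \in A$ by checking whether $x$ is among $a_0,\dots,a_m$. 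Hence $A$ is recursive. Consequently it suffices to construct a non-recursive set $L$ such that $L \leq_{1\hbox{-}tt} B$ holds for no immune set $B$: every $L'$ with $L \leq_{1\hbox{-}tt} L'$ is then non-recursive (else $L$ would be recursive) and non-immune, hence not retraceable, so the $1$-tt upward cone of $L$ (which contains the entire $1$-tt degree of $L$) has no retraceable set.

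To build $L$, I would run a straightforward injury-free construction meeting, for each $i$, a requirement $\mathcal{N}_i$ making $L$ differ from the set with characteristic function $\phi_i$ (so $L$ differs from every recursive set), and, for each $e$, a requirement $\mathcal{R}_e$ killing the $e$-th candidate $1$-tt reduction, which has query function $q_e$ and truth-table assignment $t_e$. Requirement $\mathcal{N}_i$ reserves one fresh string $x_i$ and, if and when $\phi_i(x_i)$ converges to a bit, sets $L(x_i)$ to the opposite bit. For $\mathcal{R}_e$ the key point is that if $\phi_e$ is total and genuinely reduces $L$ to some set $B$, then because $L$ is non-recursive while $L$ is recursive on the inputs carrying a constant truth table, the set $E_e = \{x : t^e_x \in \{\mathrm{id},\mathrm{neg}\}\}$ must be infinite and $q_e$ must have infinite range on $E_e$ (otherwise $L$ would be a recursive function of finitely many bits of $B$). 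So $\mathcal{R}_e$ optimistically enumerates an infinite set $S_e \subseteq E_e$ of fresh strings on which $q_e$ is injective, committing, for each such $x$, $x \in L$ when $t^e_x = \mathrm{id}$ and $x \notin L$ when $t^e_x = \mathrm{neg}$; validity of the reduction then forces $q_e(x) \in B$ for every $x \in S_e$, so every $B$ compatible with $\phi_e$ and $L$ contains the infinite r.e.\ set $q_e(S_e)$, and is therefore not immune. (If $\phi_e$ is partial or is not really a reduction from $L$, the enumeration of $S_e$ may stall with $S_e$ finite; but then $\phi_e$ has no valid target, immune or otherwise, so $\mathcal{R}_e$ holds vacuously.) Distinct requirements always pick pairwise disjoint reserved strings, and any string reserved by nobody gets $L$-value $0$; since only finitely many strings are reserved at any finite stage, each requirement always finds the fresh strings it needs, so the construction is conflict-free and injury-free.

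Assembling the pieces: $L$ is non-recursive by the $\mathcal{N}_i$; and if $L \leq_{1\hbox{-}tt} L'$, then fixing a total reduction $\phi_e$ witnessing this, $\mathcal{R}_e$ forces $L'$ to contain an infinite r.e.\ set, so $L'$ is not immune; hence every $L'$ that is $1$-tt-above $L$ is neither recursive nor immune, and by the sharpened dichotomy none of them is retraceable.

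The step I expect to be the crux is $\mathcal{R}_e$, and the reason it is delicate is exactly that $1$-tt reductions may use the negation table. It is \emph{not} enough to make $L$ itself contain a visible infinite recursive set and hope it survives upward: an adversary can cover any recursive subset of $L$ with constant-true tables and then reduce the remainder to an immune set --- for instance, a ``padded-out'' coding $\bigcup_{k\in D}B_k$ of a bi-immune set $D$ (where the $B_k$ are uniformly recursive blocks) is $m$-reducible to $D$, and $D$ is immune. One must instead use the non-recursiveness of $L$ to \emph{force} the reduction to make infinitely many genuine distinct queries, and then steer the \emph{answers} to those queries (the positive part of the target set) so as to contain an infinite r.e.\ set. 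What then remains is the routine fresh-witness bookkeeping showing that each $S_e$ can keep growing, once the role of non-recursiveness has been isolated.
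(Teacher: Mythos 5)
Your first step is sound: reducing the problem to exhibiting a nonrecursive set whose $1$-tt upward cone contains no immune set is exactly the right move, and your argument that a non-immune retraceable set is recursive is a correct proof of the Dekker--Myhill fact (which the paper simply cites). The gap is in the second step, in the claim that requirement $\mathcal{R}_e$ succeeds, i.e., that $S_e$ is infinite whenever $\phi_e$ is a total $1$-tt reduction of $L$ to some $B$. Your justification has two halves --- ``$E_e$ is infinite with $q_e$ of infinite range on it, else $L$ would be recursive,'' and ``each requirement always finds the fresh strings it needs because only finitely many strings are reserved at any finite stage'' --- and the second half does not follow. Over the whole construction infinitely many strings get reserved (each $\mathcal{R}_{e'}$ reserves infinitely many), so the infinite set $\{x \in E_e \condition q_e(x) \notin q_e(S_e)\}$ can be swallowed, element by element, by other requirements before $\mathcal{R}_e$'s dovetailed search reaches its members; this is the classic conflict among infinitary positive requirements, not routine bookkeeping. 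The standard repairs each undercut the other half of your argument: if you give $\mathcal{R}_e$ its own recursive column $C_e$ of candidate witnesses, then ``$S_e$ finite'' only shows that $\phi_e$ makes finitely many genuine queries \emph{on $C_e$}, and this is perfectly consistent with $\phi_e$ being a valid reduction, since by your own construction $L \cap C_e$ is then finite, hence recursive, and may legitimately be handled by constant tables. The nonrecursiveness of $L$ lives on the $\mathcal{N}_i$-witnesses, where $\mathcal{R}_e$ controls neither the truth tables nor the $L$-values; there the relevant part of $L$ is r.e.\ but its complement is not, so a reduction that uses negation tables on those witnesses contributes to $B$ a set on which your argument gives no handle. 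As written, the construction therefore does not establish the property you need, and patching it requires a new idea rather than bookkeeping.

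That missing idea, in the paper's treatment, is to dispense with the construction entirely: take disjoint, recursively inseparable r.e.\ sets $A$ and $B$, suppose $A \leq_{1\hbox{-}tt} S$ via $f$ with $S$ immune, and observe that both $\{f(x) \condition x \in A \text{ and } f \text{ uses the identity table on } x\}$ and $\{f(x) \condition x \in B \text{ and } f \text{ uses the negation table on } x\}$ are r.e.\ subsets of $S$, hence finite; these two finite tables then yield a recursive separation of $A$ from $B$, a contradiction. The inseparable \emph{pair} is precisely what makes the subsets of $S$ arising from \emph{both} polarities of the truth table r.e.\ --- the symmetry that a single constructed $L$ lacks and that your $\mathcal{R}_e$ strategy is struggling to manufacture by hand.
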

\begin{proof}
For clarity, we first 
discuss 
the 1-truth-table degree case.
Recall that a set is said to be 
\emph{immune} (or \emph{r.e.-immune}) if it is infinite but 
contains no infinite r.e.\ (equivalently, recursive) subsets.
Since 
every retraceable set is recursive or immune~\cite{dek-myh:j:retraceable},
it will suffice to find a 1-truth-table degree
with no recursive or immune sets.
Let $A$ and $B$ be a pair of disjoint, r.e.\ sets 
that are recursively inseparable.  It is well-known that 
such pairs exist, e.g., $\{i \condition M_i(i)$ halts and outputs 1$\}$
and $\{i \condition M_i(i)$ halts and outputs 2$\}$.
We claim that if $A \leq_{1\hbox{-}tt} S$, then $S$ is not immune.
Let $f$ be a 1-truth-table 
reduction from $A$ to $S$ and suppose, seeking a 
contradiction, that $S$ is immune.
Then the set $L_A = \{f(x) \condition x \in A $ and $ f(x) $ uses 
the identity truth-table$\}$ is an r.e.\ subset of $S$, so $L_A$ it must be finite. 
Similarly, $L_B = \{ f(x) \condition x \in B $ and $f(x)$ uses the 
negation truth-table$\}$ must be finite.
We can use $L_A$ and $L_B$, however, to recursively separate $A$ 
and $B$. Define the function $g(x)$ as follows:

If $f(x)$ uses the identity truth-table, and $f(x) \in L_A$, then $g(x) = 1$.

If $f(x)$ uses the identity truth-table, and $f(x) \not \in L_A$, 
then $g(x) = 0$.

If $f(x)$ uses the negation truth-table, and $f(x) \in L_B$, then $g(x) = 0$.

If $f(x)$ uses the negation truth-table, and $f(x) \not \in L_B$, then
$g(x) = 1$.

For every $x \in A$, $g(x) = 1$, and for every $x \in B$, $g(x) =
0$. The function $g$ is recursive because both $L_A$ and $L_B$ are
finite. However, this contradicts the recursive inseparability of $A$
and $B$, so $S$ must not be immune. Thus the 1-truth-table degree of $A$
contains no recursive or immune sets, and so contains no retraceable
sets.

The proof as given above in fact also shows that the 
1-truth-table upward (reducibility) cone of $A$ contains 
no retraceable sets.
\end{proof}}

Another concept from recursive function theory that has a 
similar flavor 
to the notions we are looking at is the 
notion of an \emph{isolic reduction}.
$A$ is said to \emph{isolic-reduce} to $B$ 
if there exists a one-to-one partial recursive function
$f$ such that $A = f^{-1}(B)$ (see~\cite[p.~124]{rog:b:rft}).
Sets that isolic-reduce
to $\sigmastar$ thus have a similar definitional flavor to 
our notion of $\fre$-compression.
However, 
note that isolic reductions are required to be one-to-one, 
and so unlike our notions cannot allow even a single element of 
$\overline{A}$ to be mapped to a member of $B$.
That is enough to make them strikingly 
differ in behavior from
the notions we
are studying.   In particular, 
our paper puts recursively- and partial
recursively-compressible sets into \emph{every} 1-truth-table degree, but 
in contrast
the class of sets that
isolic-reduce to $\sigmastar$ is precisely the infinite r.e.\ sets, and
so \emph{no} sufficiently hard 1-truth-table degrees contain any sets 
that 
isolic-reduce to $\sigmastar$.

\section{Definitions}\label{s:definitions}

Throughout this paper, we fix the alphabet to be the binary alphabet
$\Sigma = \{0,1\}$.  So all our notions will involve (total or
partial) functions whose input universe is $\sigmastar$ and whose
codomain is $\sigmastar$, and all classes (e.g., the recursive sets)
are viewed as being over sets whose alphabet is $\Sigma$.  

Why is it natural to focus just on $\Sigma=\{0,1\}$?  For every two
finite alphabets $\Sigma'$ and $\Sigma''$, there is a recursive,
order-respecting bijection between ${\Sigma'}^\ast$ and ${\Sigma''}^\ast$.
So for all natural purposes in the context of recursive function
theory, any pair of finite alphabets are essentially computationally
interchangeable.

$\frec$ will denote the class of all total recursive functions from
$\sigmastar$ to $\sigmastar$.  $\fre$ will denote the class of all
partial recursive functions from $\sigmastar$ to $\sigmastar$.  
$\domain(f)$ is the set of inputs on which a (potentially partial)
function $f$ is not undefined, e.g., if $f$ is a total function,
$\domain(f) = \sigmastar$.

$\rec$ and $\re$ will denote the recursive sets and the recursively
enumerable sets.  As usual,
$\core = \{A \condition \overline{A} \in \re\}$ and $\deltatwozero$
the class of all sets $A$ such that there exists a set $B \in \re$
such that $A$ is recursive in $B$ (i.e., $A$ recursively Turing
reduces to $B$).  These are low levels of what is known
as the 
\emph{arithmetical (or Kleene--Mostowski) hierarchy}~\cite{kle:j:arithmetical,mos:j:arithmetical}.
We will often use r.e.\ and co-r.e.\ as adjectival 
forms of $\re$ and $\core$, e.g., ``each r.e.\ set belongs to the 
class $\re$,'' although at times we will also use the 
terms RE and coRE 
themselves as adjectives.

$\epsilon$ 
will denote the empty string, and we use ``lexicographical''
in its standard computer science sense, 
e.g., $\epsilon \leqlex
0 \leqlex 1 \leqlex  00 \leqlex \cdots$.
$\nextstring(x)$ will denote the lexicographical successor of $x$,
e.g., $\nextstring(11)=000$.
For any set $A \subseteq \sigmastar$ and any string
$x \in \sigmastar$, $A^{\leq x}$ denotes all strings in $A$ that are
lexicographically less than or equal to $x$.  
$A^{<x}$ and $A^{\geq x}$ are defined analogously.
We will use these notations even for $\sigmastar$ itself,
e.g., if $x$ is the string $10$ then $(\sigmastar)^{\leq x}$ is 
the set $\{\epsilon,0,1,01,10\}$.
For each finite set $A$, $\|A\|$ will denote the cardinality of~$A$.
The function $\pair{ \cdot, \cdot}$
will denote a 
fixed, standard, recursive pairing function,
i.e.,
a recursive bijection between $\sigmastar \times \sigmastar$ and $\sigmastar$.

We say that $A_1$ and $A_2$ are \emph{(recursively) isomorphic}, denoted
$A_1 \equiv_{\textit{iso}} A_2$, exactly if there is a 1-to-1, onto,
total recursive function $f$ from $\sigmastar$ to $\sigmastar$ such
that $f(A_1) = A_2$.  For any reducibility $\leq_\alpha$, 
we say $A \equiv_\alpha B$ exactly if $A \leq_\alpha B$ and 
$B \leq_\alpha A$.  When $\leq_\alpha$ is reflexive 
and transitive, $\equiv_\alpha$ will be an equivalence relation,
and each equivalence class with respect to 
$\equiv_\alpha$ 
is said to be an \emph{$\leq_\alpha$ degree}.
The reducibilities whose degrees will 
be discussed during the rest of this paper
are
recursive many-one reductions ($\leq_m$), which we 
will also refer to simply as many-one reductions,
and recursive 1-truth-table
reductions ($\leq_{\textit{1-tt}}$), which we will also refer to 
simply as 
1-truth-table reductions. As is typical,
we will refer to
the
$\leq_m$ degrees 
and 
$\leq_{\textit{1-tt}}$ degrees as, respectively, 
many-one degrees and 1-truth-table degrees.
(Rather than define here the machinery of truth-table
reductions,
suffice it to say that the following 
is a true statement: $A$ 1-truth-table
reduces to $B$ exactly if $A$ Turing reduces to $B$ 
via a recursive transducer that on each input makes at most one query 
to $B$).
All recursive sets belong to a single 
1-truth-table degree, which in fact is exactly $\rec$.
All many-one degrees except the 
somewhat pathological many-one degree $\{\emptyset\}$ contain
infinite sets.

The \emph{Myhill Isomorphism Theorem}~\cite{myh:j:creative}
(or see~\cite[pp.~24]{soa:b:degrees})
states that $A\equiv_{\textrm{iso}}B \Longleftrightarrow 
A \equiv_{1} B$, where $\leq_1$ denotes (recursive) 1-to-1 reductions.
Though it is not a standard nickname, for 
convenience we will use the 
term 
\emph{Myhill's Corollary} to refer 
to the result 
that 
all RE-complete (with respect 
to many-one recursive reductions) sets are recursively 
isomorphic; and we will also refer to as Myhill's Corollary
the fact, semantically identical, that 
all coRE-complete (with respect 
to many-one recursive reductions) sets are recursively 
isomorphic.  
(Myhill's Corollary is well-known to with some argumentation
follow from
the Myhill Isomorphism Theorem, see, e.g., 
\cite[pp.~42--43]{soa:b:degrees} 
or 
\cite[Theorem III.6.6~+~Corollary~III.7.14]{odi:b:classical-recursion-theory}.)
$\K$ as mentioned earlier 
will denote the RE-complete set $\{x \condition x \in L(M_x)\}$,
where $L(M_i)$ denotes the language accepted by $M_i$, and 
$M_1,M_2,M_3,\ldots$ (or the same using strings as the subscripts under the 
standard correspondence between positive natural numbers and strings)
is a fixed, standard enumeration of (language-computing) Turing 
machines.

A set $A$ is a \emph{cylinder} exactly if for some $B$ it holds that 
$A \equiv_{\textit{iso}} B \times \sigmastar$.
We say that a set is $\core$ cylinder exactly if it is 
in coRE and is a cylinder.

We now define the class of compressible sets.  Our definition is the
precise analogue of the notion of P-compression of Goldsmith,
Hemachandra, and Kunen~\cite{gol-hem-kun:j:address}, except since we
will be studying the recursion-theoretic case we have removed the
requirement that the function be total and polynomial-time
computable.  Thus we are capturing the notion of a function
that when restricted to $A$ creates a total (on $A$), 
1-to-1, onto mapping to 
$\sigmastar$.
\begin{definition}[Compressible sets]\label{d:compressible}~
\begin{enumerate}
\item\label{d:compressible:p1} 
Given a set $A\subseteq \sigmastar$, we say that 
a (possibly partial) function $f$ is \emph{a compression 
function for $A$} exactly if
\begin{enumerate}
\item $\domain(f) \supseteq A$,
\item $f(A) = \sigmastar$, and 
\item 
$(\forall a \in A)(\forall b \in A)[ a\neq b \implies f(a) \neq f(b)]$.
\end{enumerate}
\item\label{d:compressible:p2}  
Let $\calf$ be any class of (possibly partial) functions
mapping from $\sigmastar$ to $\sigmastar$.
A set $A$ is \emph{$\calf$-compressible} exactly if 
$(\exists f \in \calf)[f \text{ is a compression function for } A]$.
\item\label{d:compressible:p3}  For each $\calf$ as above,
$\calf\text{-compressible} = \{ A \condition A \text{ is }
\calf\text{-compressible}\}$.
\item\label{d:compressible:p4}
 For each $\calf$ as above and each $\calc \subseteq 2^{\sigmastar}$,
we say that $\calc$ is \emph{$\calf$-compressible} exactly if 
$(\forall A \in \calc)[\text{If } A \text{ is an infinite set, then }
A \text{ is } \calf\text{-compressible}]$.
\end{enumerate}
\end{definition}

In a slight notational overloading, the above definition uses
$\calf\text{-compressible}$ both as an adjective and to represent the
corresponding class of sets.  Note that the above definition does not
constrain what $f$ does on elements of $\overline{A}$.  $f$ can be
undefined on some or all of those, and if it is defined on some of
those, note that that will make the overall map be non-1-to-1.  Of
course, $f$ may otherwise be constrained to be total, e.g., when we
speak of $\frec$-compressible sets, the $f$ involved must be total due
to the definition of $\frec$.  

No finite set has a compression function, since a finite set doesn't
have enough strings in it to map \emph{onto} $\sigmastar$.  This is
why part~\ref{d:compressible:p4} of the above definition
defines a class of sets as being compressible under a certain type of
function if all the class's infinite sets are thus compressible.
Though we could rig 
even parts~\ref{d:compressible:p2} 
and
\ref{d:compressible:p3} of
the definition of compression 
(rather than just 
part~\ref{d:compressible:p4})
to give 
finite sets a free pass, the given definition in 
each of these choices 
is exactly matching the long-standing, analogous 
complexity-theoretic definitions~\cite{gol-hem-kun:j:address}.
When we do wish to speak of the compressible sets augmented
by the finite sets, we will do so explicitly using the 
following:
$$\calf\text{-compressible}' =_{\text{def}} \calf\text{-compressible} \cup
\{ A \subseteq \sigmastar \condition A \text{ is finite}\}.$$

To get a sense of how compression works in a simple case, let 
us note the following.
\begin{proposition}\label{p:106.1}
$\re$
is $\fre$-compressible.
\end{proposition}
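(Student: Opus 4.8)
The plan is to unwind the definitions and then give an essentially one-line construction. By part~\ref{d:compressible:p4} of Definition~\ref{d:compressible}, to show that $\re$ is $\fre$-compressible it suffices to show that every \emph{infinite} set $A \in \re$ is $\fre$-compressible; that is, I must exhibit a partial recursive $f$ with $\domain(f) \supseteq A$, with $f(A) = \sigmastar$, and with $f$ one-to-one on $A$.

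First I would pass to a repetition-free enumeration of $A$. Fix any recursive enumeration of the r.e.\ set $A$, and run it while discarding every element that has already appeared; this produces an enumeration $a_0, a_1, a_2, \ldots$ of $A$ \emph{without repetitions}, and since $A$ is infinite this list is infinite and every element of $A$ occurs on it exactly once. Let $s_0 \leqlex s_1 \leqlex s_2 \leqlex \cdots$ be the lexicographic enumeration of $\sigmastar$. Now define $f$ as follows: on input $x$, simulate the repetition-free enumeration of $A$ step by step; if and when $x$ turns up as some value $a_i$, halt and output $s_i$; otherwise run forever. This $f$ is partial recursive because the search involved is effective. For every $x \in A$ the search terminates, so $\domain(f) \supseteq A$; if $a \neq b$ are both in $A$ then $a = a_i$ and $b = a_j$ with $i \neq j$, so $f(a) = s_i \neq s_j = f(b)$, giving injectivity on $A$; and $f(A) = \{ s_i \condition i \geq 0 \} = \sigmastar$ because $i$ ranges over all indices as $a_i$ ranges over $A$. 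Hence $f \in \fre$ is a compression function for $A$, which completes the argument.

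There is no real obstacle here: the only two points needing (minor) care are that the de-duplicated enumeration is genuinely infinite — this is exactly where the hypothesis that $A$ is infinite is used, and is also why part~\ref{d:compressible:p4} of Definition~\ref{d:compressible} must give finite sets a free pass — and that the ``search within the enumeration'' is packaged as a single partial recursive function (routine dovetailing of the enumerator). I would close with the remark that the $f$ produced is in general \emph{not} order-respecting, since the enumeration order of an r.e.\ set need not agree with $\leqlex$; so this argument yields $\fre$-compressibility but says nothing about $\fre$-rankability, which is consistent with the paper's later results separating the two notions for RE sets.
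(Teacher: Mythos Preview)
Your proof is correct and is essentially the same as the paper's: both take a repetition-free enumeration of the infinite r.e.\ set and map the $i$th enumerated string to the $i$th string of $\sigmastar$. Your write-up is more detailed (you explicitly invoke part~\ref{d:compressible:p4} of Definition~\ref{d:compressible} to reduce to infinite sets and verify all three compression conditions), but the underlying construction is identical.
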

\begin{proof}
Let $A$ be any r.e.\ set.
Since $A$ is
r.e., there 
is an enumerating Turing machine that enumerates $A$
without repetitions.  Our $\fre$ compression function for $A$ will
map the $i$th enumerated string to the $i$th string in $\sigmastar$,
and will be undefined on all strings that are never enumerated (i.e.,
that belong to $\overline{A}$).  
\end{proof}

Compression of $A$ implies that in the image of the compression function
on $A$ we leave no holes: $f(A) = \sigmastar$.  That overall niceness
however does not imply that $f$ will never map any string in $A$ to a
lexicographically larger string.  $f$ certainly can, though the more
often it does so, the more often other strings in $A$ will need to map
to lexicographically smaller strings, to prevent any ``holes'' in the
image of $A$.  The more demanding notion called ranking, however,
\emph{does} ensure that no string in $A$ will map to a
lexicographically larger string.

Ranking is a particularly nice type of compression---compression that
simply maps the $i$th string in $A$ to the $i$th string in
$\sigmastar$.  There are three slightly differing versions of ranking,
depending on what one requires regarding what happens on inputs that
are not in $A$.  The following definition follows the one of those
that handles this analogously with the way it is handled in
compression, e.g., on inputs that are not in $A$ we allow 
the function to map to any strings it
wants, or even to be undefined.  
(Informally put, the compression function can ``lie'' or can be 
undefined on inputs $x\not\in A$.)
Hemachandra
and Rudich~\cite{hem-rud:j:ranking} (for the complexity-theoretic
case) defined this notion and called it ``weak ranking.'' However, to keep
our notations for compression and ranking in harmony with each other,
we will in this paper consistently refer to this simply as
``ranking.''\footnote{\label{f:other-ranking}The other two approaches to handling
  $\overline{A}$ have a
behavior is not too interesting in the recursion-theoretic world.  (In
the complexity-theory world, due to the work of Goldberg and
Sipser~\cite{gol-sip:j:compression} and Hemachandra and
Rudich~\cite{hem-rud:j:ranking}, it is known that for each of the
three notions, one has that all P sets are polynomial-time rankable
under that notion exactly if all $\#P$ functions---i.e., the counting
version of NP---are polynomial-time computable.)  The other two
notions are (a)~to additionally require that on members of
$\overline{A}$ the function either is undefined 
or states that they are not
members of $A$ (this notion's analogue is called ``ranking''
in~\cite{hem-rud:j:ranking}), or (b)~to additionally require that on
members of $\overline{A}$ the function computes and outputs
$\| A^{\leq x}\|$ (this notion's analogue is called ``ranking''
in~\cite{gol-sip:j:compression} and is called ``strong ranking''
in~\cite{hem-rud:j:ranking}).  However, under each of these notions,
with respect to either of $\frec$ or $\fre$, the class of sets thus
rankable is exactly the recursive sets; we include a proof of this 
in 
Appendix~\ref{sapp:defs}.  
Thus these two
notions, though interesting in the complexity-theoretic study of
ranking, are not interesting in the recursion-theoretic study of
ranking.}

\begin{definition}[Rankable sets]\label{d:rankable}~
\begin{enumerate}
\item Given a set $A\subseteq \sigmastar$, we say that 
a (possibly partial) function $f$ is \emph{a ranking 
function for $A$} exactly if
\begin{enumerate}
\item $\domain(f) \supseteq A$, and 
\item if $x \in A$, then $f(x) = 
\| A^{\leq x}\|$.  (That is, if $x$ is the $i$th string in $A$, 
then $f(x)$ is the $i$th string in $\sigmastar$.)
\end{enumerate}
\item 
Let $\calf$ be any class of (possibly partial) functions
mapping from $\sigmastar$ to $\sigmastar$.
A set $A$ is \emph{$\calf$-rankable} exactly if 
$(\exists f \in \calf)[f \text{ is a ranking function for } A]$.
\item For each $\calf$ as above,
$\calf\text{-rankable} = \{ A \condition A \text{ is }
\calf\text{-rankable}\}$.
\item\label{d:rankable:part-class}
 For each $\calf$ as above and each $\calc \subseteq 2^{\sigmastar}$,
we say that $\calc$ is \emph{$\calf$-rankable} exactly if 
$(\forall A \in \calc)[
A \text{ is } \calf\text{-rankable}]$.
\end{enumerate}
\end{definition}

For example, clearly every recursive set is 
$\frec$-rankable by brute force.  However, we will later see
that, in contrast, some infinite r.e.\ sets are not even
$\frec$-compressible.

Aside from the quirk that finite sets cannot be compressible,
rankability clearly implies compressibility.  And of course,
every total recursive function is a partial recursive function.
So we have the following trivial containments.
\begin{proposition}\label{p:housekeeping-1}
\begin{enumerate}
\item \label{p:housekeeping-1-a}
$(\forall \calf) [ \calf\text{-rankable} \subseteq \calf\text{-compressible}']$.
\item 
$\frec\text{-rankable} \subseteq \fre\text{-rankable}$.
\item 
$\frec\text{-compressible} \subseteq \fre\text{-compressible}$.
\end{enumerate}
\end{proposition}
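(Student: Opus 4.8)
The plan is to verify each of the three inclusions directly from the definitions; all three are essentially bookkeeping, and I would lay them out in the order they are stated. For part~\ref{p:housekeeping-1-a}, I would fix an arbitrary class $\calf$ and a set $A$ that is $\calf$-rankable, witnessed by some ranking function $f \in \calf$. If $A$ is finite, then $A$ already belongs to $\calf\text{-compressible}'$ by the explicit inclusion of all finite sets in that class, so the only substantive case is when $A$ is infinite, where I must show that the very same $f$ is a compression function for $A$. The domain condition $\domain(f) \supseteq A$ is inherited verbatim from the definition of a ranking function, so it remains to check surjectivity and injectivity on $A$.

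For surjectivity, I would use that when $A$ is infinite the quantity $\|A^{\leq x}\|$, as $x$ ranges over the elements of $A$ taken in lexicographic order, assumes every positive integer value in turn; hence, under the standard correspondence between positive natural numbers and strings, $f(A) = \sigmastar$. For injectivity on $A$, I would observe that if $a, b \in A$ with $a \llex b$, then $A^{\leq a} \subsetneq A^{\leq b}$ (the element $b$ itself witnesses the strict containment), so $\|A^{\leq a}\| < \|A^{\leq b}\|$ and therefore $f(a) \neq f(b)$; the case $b \llex a$ is symmetric, which handles all $a \neq b$. Thus $f \in \calf$ is a compression function for $A$, so $A$ is $\calf$-compressible. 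Parts~2 and~3 are then immediate from the containment $\frec \subseteq \fre$: a total recursive function is in particular a partial recursive function, so any $\frec$ ranking (respectively, compression) function for a set is also an $\fre$ ranking (respectively, compression) function for that set, and the witness transfers unchanged.

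I do not anticipate a genuine obstacle. The only point that requires even a moment's care is the surjectivity step in part~\ref{p:housekeeping-1-a}, where the infinitude of $A$ must be invoked explicitly; this is precisely why the right-hand side of that inclusion must be $\calf\text{-compressible}'$ rather than $\calf\text{-compressible}$, since no finite set can admit a compression function at all.
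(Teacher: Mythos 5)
Your proposal is correct and matches the paper's reasoning: the paper states this proposition without a formal proof, justifying it in the preceding sentences by exactly the two observations you elaborate (a ranking function is a compression function once the set is infinite, and $\frec \subseteq \fre$). Your added detail on surjectivity and injectivity of the rank map, and on why the prime is needed for finite sets, is a faithful expansion of that same argument.
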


\section{Ranking}\label{s:ranking}
In Footnote~\ref{f:other-ranking} we noted that, for the ranking
variants where the ranking function's behavior on the complement is
constrained, the class of things that can be $\frec$ ranked, or even $\fre$
ranked, (in that variant) is precisely $\rec$, the recursive sets.

In contrast with those variants, we now show that arbitrarily complex
sets are $\frec$-rankable.  So, certainly, some $\frec$-rankable sets
are not recursive.

\begin{theorem}\label{t:103.1}
Every 1-truth-table degree 
contains an $\frec$-rankable set.
\end{theorem}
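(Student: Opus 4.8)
The plan is to realize an arbitrary $\leq_{\textit{1-tt}}$ degree---say, the degree of a set $S\subseteq\sigmastar$---by a set $A$ into which $S$ has been ``spread out'' so thinly that the rank of any member of $A$ can be read directly off the string. Fix the lexicographic enumeration $s_0\leqlex s_1\leqlex s_2\leqlex\cdots$ of $\sigmastar$ and partition it into the two-element blocks $\{s_{2i},s_{2i+1}\}$, $i\geq 0$. Identifying the natural number $i$ with the $i$th string, define
$$A=\{\,s_{2i}\condition i\in S\,\}\cup\{\,s_{2i+1}\condition i\in\overline{S}\,\}.$$
The essential feature of this encoding is that block $i$ always contributes \emph{exactly one} string to $A$; the bit ``$i\in S$?'' is hidden in \emph{which} of the block's two strings is chosen, never in \emph{how many}.

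First I would check that $A$ is $\frec$-rankable, with ranking function $f(s_n)=s_{\lfloor n/2\rfloor}$, which is plainly total recursive. If $x=s_{2i}\in A$ (so $i\in S$), then the members of $A$ that are $\leqlex x$ are exactly one per block $j<i$ together with $s_{2i}$ itself, so $\|A^{\leq x}\|=i+1$ and $f(x)=s_i$, the $(i+1)$st string, as required; the case $x=s_{2i+1}\in A$ (so $i\in\overline{S}$) yields the same count $i+1$. On inputs outside $A$ the definition of ranking imposes nothing, so it does not matter that $f$ is ``wrong'' there.

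Next I would verify $A\equiv_{\textit{1-tt}}S$. For $S\leq_{\textit{1-tt}}A$, on input $i$ query whether $s_{2i}\in A$ and return the answer unchanged, since $s_{2i}\in A\iff i\in S$. For $A\leq_{\textit{1-tt}}S$, on input $x=s_n$ compute $i=\lfloor n/2\rfloor$, query whether $i\in S$, and return that answer via the identity truth-table if $n$ is even and via the negation truth-table if $n$ is odd, since $s_{2i}\in A\iff i\in S$ and $s_{2i+1}\in A\iff i\notin S$. Both are recursive $1$-truth-table reductions, so $A$ lies in the degree of $S$. (The zero degree is included trivially: every recursive set is $\frec$-rankable by brute force, and indeed the $A$ above is recursive whenever $S$ is.)

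The one point that needs care is conceptual, and it is exactly what the choice of encoding addresses: since ranking requires the \emph{cumulative} census $\|A^{\leq x}\|$ to be total-recursively computable on every member of $A$, the naive ``all-or-nothing'' scheme (put the whole of block $i$ into $A$ iff $i\in S$) fails, because then the rank of a member of block $i$ depends on how many $j<i$ lie in $S$, which need not be recursive. Requiring every block to donate the same, recursively known number of strings to $A$ is precisely what makes each member's rank a recursive function of the string alone. I expect this to be the only real obstacle; the rest is bookkeeping.
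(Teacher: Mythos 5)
Your proof is correct and is essentially the paper's own argument: the paper takes $A\oplushat\overline{S}$-style marked union (exactly one of $x0,x1$ in the set for each $x$), while you realize the identical ``one string per two-element block'' idea via consecutive lexicographic pairs $\{s_{2i},s_{2i+1}\}$; in both cases the rank is a trivially computable function of the string alone and the $1$-truth-table equivalence with $S$ is immediate.
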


\begin{proof}%
  Let $\oplushat$ be defined by
  $A \oplushat B = \{x0\condition x\in A\} \cup \{x1\condition x\in
  B\}$, i.e. this is the standard ``join'' (aka ``disjoint union,''
  aka ``marked union''), except the marking bit is the low-order bit
  rather than as is standard the high-order bit.  Note that for
  \emph{any} set $A\subseteq \sigmastar$, $A\oplushat \overline{A}$ is
  $\frec$-rankable (indeed, it is even Logspace-rankable) by the
  function defined by
$$f(\epsilon)=\epsilon,~ f(z0)=z, \text{ and } f(z1)=z,$$
  since for each $x$ exactly one of $x0$ and $x1$ is in
  $A\oplushat \overline{A}$, and $\epsilon \not\in A\oplushat
  \overline{A}$.
\end{proof}

\begin{corollary}
There exist sets $A$ that are not in the arithmetical hierarchy 
yet are $\frec$-rankable.
\end{corollary}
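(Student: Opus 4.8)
The plan is to reuse, essentially verbatim, the construction from the proof of Theorem~\ref{t:103.1}. That proof actually establishes more than the theorem as stated: for \emph{every} set $A\subseteq\sigmastar$, the set $A\oplushat\overline{A}$ is $\frec$-rankable, witnessed by the fixed recursive function $f(\epsilon)=\epsilon$, $f(z0)=z$, $f(z1)=z$. So it will suffice to exhibit a single set $A$ for which $A\oplushat\overline{A}$ falls outside the arithmetical hierarchy; then $A\oplushat\overline{A}$ itself is the set the corollary asks for.

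First I would fix a set $A$ that is not in the arithmetical hierarchy. Such sets exist for pure cardinality reasons: each level of the arithmetical hierarchy contains only countably many sets (the sets in a given level being indexed by, e.g., arithmetical formulas, or equivalently by Turing machine indices relative to finitely many jumps), so the whole hierarchy is a countable union of countable classes and hence countable, whereas $2^{\sigmastar}$ is uncountable. If one prefers a concrete witness, one can instead take $A=\emptyset^{(\omega)}$, the $\omega$-th Turing jump of the empty set, which is a standard example of a set lying strictly above every level of the arithmetical hierarchy.

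Next I would check that passing from $A$ to $A\oplushat\overline{A}$ cannot drag the set back into the arithmetical hierarchy. The recursive map $x\mapsto x0$ is a many-one reduction from $A$ to $A\oplushat\overline{A}$, since $x0\in A\oplushat\overline{A}$ if and only if $x\in A$. Each level of the arithmetical hierarchy is closed downward under $\leq_m$ (the classical fact that each $\Sigma^0_n$ and each $\Pi^0_n$ is closed under many-one reductions), so if $A\oplushat\overline{A}$ were in some level, then $A$ would be in that same level, contradicting our choice of $A$. Hence $A\oplushat\overline{A}$ is not in the arithmetical hierarchy; combining this with the previous paragraph gives a set, namely $A\oplushat\overline{A}$, that is $\frec$-rankable yet lies outside the arithmetical hierarchy, which is exactly the claim.

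There is essentially no obstacle to overcome here; every step is routine. The only point meriting a word of care is the claim that ``lies outside the arithmetical hierarchy'' is preserved under going downward by $\leq_m$, and that is immediate from the well-known closure of each $\Sigma^0_n$ and $\Pi^0_n$ under many-one reductions.
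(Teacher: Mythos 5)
Your proposal is correct and is essentially the paper's intended argument: the corollary is stated as an immediate consequence of Theorem~\ref{t:103.1}, whose proof already shows $A\oplushat\overline{A}$ is $\frec$-rankable for every $A$, and choosing $A$ outside the arithmetical hierarchy (which exists, e.g., $\emptyset^{(\omega)}$ or by countability) keeps $A\oplushat\overline{A}$ outside as well since $A\leq_m A\oplushat\overline{A}$ and each level of the hierarchy is closed under many-one reductions. Nothing is missing.
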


However, it follows from Theorem~\ref{t:109.2} of the next section---which
establishes that $\rec = \re \cap \frec\text{-compressible}'$---that 
Theorem~\ref{t:103.1} cannot be improved from 1-truth-table
degrees to many-one degrees.

Theorem~\ref{t:103.1} shows that $\frec$-rankable sets occur
everywhere.
Nonetheless,  we show as Theorem~\ref{t:123.1}
that the \emph{non}-$\frec$-rankable sets also occur
everywhere.  
Theorem~\ref{t:25.2BK54} 
notes that for the 
case of r.e.\ sets, $\fre$-rankability even implies 
decidability,
thus all sets in $\re - \rec$ are 
non-$\frec$-rankable.

\begin{theorem}\label{t:123.1}
Every 1-truth-table degree except that of the recursive sets 
contains a set that is $\frec$\tcomp\ but not $\frec$-rankable.
\end{theorem}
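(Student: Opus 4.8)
The plan is to fix an arbitrary 1-truth-table degree $\mathbf{d}$ other than that of the recursive sets, fix any set $X\in\mathbf{d}$ (which is then necessarily not recursive), and hand-build a set $S\in\mathbf{d}$ whose ``block structure'' is rigid enough to admit an explicit total recursive compression function, yet whose census is so entangled with $X$ that no total recursive ranking function can exist. Writing $s_0\leqlex s_1\leqlex s_2\leqlex\cdots$ for the strings in lexicographic order and grouping them into consecutive blocks $\{s_{3m},s_{3m+1},s_{3m+2}\}$, I would set
\[
S \;=\; \{\,s_{3m+1} : m\ge 0\,\}\;\cup\;\{\,s_{3m} : m\notin X\,\}\;\cup\;\{\,s_{3m+2} : m\in X\,\},
\]
so that in block $m$ the middle string is always present while exactly one of the two outer strings is present, with membership of $m$ in $X$ deciding which. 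This design has two purposes: every block contributes exactly two strings to $S$, which will make a compression that collapses each $3$-block onto a $2$-block essentially automatic; but the guaranteed member $s_{3m+1}$ sits \emph{after} the undecidable choice inside its own block, so its rank in $S$ encodes a bit of $X$.

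I would then carry out three routine verifications. First, $S\equiv_{\textit{1-tt}}X$, hence $S\in\mathbf{d}$: for $X\le_{\textit{1-tt}}S$ use $m\in X\iff s_{3m+2}\in S$, and for $S\le_{\textit{1-tt}}X$, on input $s_j$ answer ``yes'' outright if $j\equiv 1\pmod 3$, and otherwise query $s_m$ (where $j\in\{3m,3m+2\}$) under the negation truth-table if $j=3m$ and the identity truth-table if $j=3m+2$. Second, $S$ is $\frec$-compressible via the total recursive map $f$ defined by $f(s_{3m})=f(s_{3m+2})=s_{2m+1}$ and $f(s_{3m+1})=s_{2m}$: the two members of $S$ in block $m$ are sent onto $\{s_{2m},s_{2m+1}\}$, so $f(S)=\sigmastar$, and $f$ is injective on $S$ because images of distinct blocks are disjoint and the one in-block collision $f(s_{3m})=f(s_{3m+2})$ never involves two members of $S$. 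Third, $S$ is not $\frec$-rankable: if $g\in\frec$ ranked $S$ then, since $s_{3m+1}\in S$ always, $g(s_{3m+1})=\|S^{\le s_{3m+1}}\|$, and a block-by-block count gives
\[
\|S^{\le s_{3m+1}}\| \;=\; (m+1)\;+\;\bigl((m+1)-\|X\cap\{0,\dots,m\}\|\bigr)\;+\;\|X\cap\{0,\dots,m-1\}\|,
\]
which equals $2m+1$ if $m\in X$ and $2m+2$ if $m\notin X$; hence from $m\mapsto g(s_{3m+1})$ one could recursively decide $X$, a contradiction. (The same count in fact shows $S$ is not even $\fre$-rankable, since $g$ must be defined on the always-present strings $s_{3m+1}$.)

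The main obstacle, and the reason for the precise gadget above, is the tension between the two goals. Compressibility effectively demands that the census be recursively predictable at block boundaries, so that the sparse, $X$-dependent part of $S$ cannot punch holes in the image — hence every block must contain the same number of elements. On the other hand, rankability fails only if some guaranteed member of $S$ is preceded \emph{within its own block} by an undecidable contribution to the running count; if instead the always-present witness came first in each block, as for the $\oplushat$ set behind Theorem~\ref{t:103.1}, the ranks could be ``canonicalized'' and the set would turn out rankable. Reconciling the two — a fixed block count together with a guaranteed member positioned after a local ``coin flip'' — is the real content; once that is arranged, the only things to watch are matching the rank convention $f(x)=\|A^{\le x}\|$ exactly and checking that the two reductions are genuine 1-truth-table reductions (a single query plus a one-variable truth-table), not merely one-query Turing reductions.
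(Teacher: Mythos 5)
Your proposal is correct and is essentially the paper's own argument: both build a periodic block structure in which every block contributes a fixed number of elements (so a block-by-block collapse gives a total recursive compression function) while the position of the block's undecidable element encodes a bit of an arbitrary nonrecursive set, so any ranker's value on an always-present string would decide that set. The only difference is cosmetic — you use 3-blocks mapping onto 2-blocks where the paper uses 4-blocks mapping onto 3-blocks — and your closing observation that the construction even defeats $\fre$-ranking holds equally for the paper's version.
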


\begin{proof}%
Let $A$ be an arbitrary nonrecursive set.
Let $s_0, s_1, s_2, ...$ enumerate $\sigmastar$ in lexicographical order.
Define 
$$B = \{ s_{4i} \condition i\geq 0 \} \cup \{ s_{4i+1} \condition s_i \in A \} \cup \{ s_{4i+2}\condition i \geq 0 \} \cup \{ s_{4i+3} \condition s_i \in \overline{A} \}.$$
So $B$ consists of a pattern that repeats every four strings. 
Namely, the first and third strings are always in, and exactly one of the second and fourth is in.
Then $A \equiv_{\textit{1-tt}} B$ and $B$ is $\frec$\tcomp\ by the map $f$ defined by
$$f(s_{4i}) = s_{3i},$$
$$f(s_{4i+1}) = s_{3i+1},$$
$$f(s_{4i+2}) = s_{3i+2},\text{ and}$$
$$f(s_{4i+3}) = s_{3i+1}.$$
However, if $B$ were $\frec$\trank, then $A$ would be recursive.
Why?
If $g$ 
is an $\frec$ ranking function for $B$,
then it holds that
$$s_i \in A \iff g(s_{4i+2}) - g(s_{4i}) = 2.$$
Since $A$ is not recursive, $B$ cannot be $\frec$\trank.
\end{proof}

\begin{theorem}\label{t:25.2BK54}
Every r.e.\ $\fre$\trank\ set is recursive.  (Equivalently,
$\rec = \re \cap \fre\trank
= \re \cap \frec\trank$.)
\end{theorem}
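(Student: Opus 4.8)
The plan is to prove the equivalent statement $\rec = \re \cap \fre\trank$, since the containments $\rec \subseteq \re \cap \frec\trank \subseteq \re \cap \fre\trank$ are immediate (every recursive set is trivially $\frec$-rankable by brute-force counting, and $\frec\trank \subseteq \fre\trank$ by Proposition~\ref{p:housekeeping-1}). So the real content is the inclusion $\re \cap \fre\trank \subseteq \rec$: if $A$ is r.e.\ and $\fre$-rankable, then $A$ is recursive.

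First I would dispose of the finite case trivially and assume $A$ is infinite. Let $f \in \fre$ be a ranking function for $A$, so $\domain(f) \supseteq A$ and for each $x \in A$ we have $f(x) = \|A^{\leq x}\|$ (the $i$th string of $\sigmastar$ when $x$ is the $i$th element of $A$). Since $A$ is r.e., fix an enumerating machine $E$ that lists $A = \{a_0, a_1, a_2, \ldots\}$ (possibly with repetitions, or without---either is fine). The key observation is that the rank function, when correct, is strictly increasing along the lexicographic order of $A$: if $x \llex y$ are both in $A$, then $\|A^{\leq x}\| < \|A^{\leq y}\|$, i.e.\ $f(x) \llex f(y)$. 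Moreover $f$ takes consecutive values: the elements of $A$ in lex order get ranks $\epsilon, 0, 1, 00, \ldots$ In particular, the smallest element of $A$ is the unique $x \in A$ with $f(x) = \epsilon$, the second-smallest is the unique $x \in A$ with $f(x) = 0$, and so on.

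To decide membership of a given string $w$ in $A$: run the enumeration $E$ and simultaneously compute $f$ on each enumerated string (dovetailing, since $f$ is only partial---but $f$ is total on $A$, so $f(a_i)$ does converge for every $a_i$ that $E$ produces). As the enumeration proceeds, I maintain the finite set of (string, rank) pairs $\{(a_i, f(a_i))\}$ discovered so far. The termination condition is this: I claim that if I have seen enumerated elements realizing ranks $\epsilon, 0, 1, \ldots$ up through the rank $s_k$ for some $k$ large enough that the string $w$ is lex-$\leq$ the element $x$ with $f(x)=s_k$, then I have found \emph{all} elements of $A$ that are $\leqlex w$ (because ranks are strictly increasing along $A$, so any element of $A$ below $x$ must have rank below $s_k$, hence has already appeared). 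At that point I simply check whether $w$ is among the enumerated strings that are $\leqlex w$. Since $A$ is infinite, the enumeration is guaranteed to eventually produce elements of arbitrarily large rank, so this process always halts; thus $A$ is recursive.

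The main obstacle---and the step to be careful about---is making the termination argument airtight: I need to be sure that ``I have found an element of $A$ with a large enough rank'' is itself a decidable halting condition and that it genuinely certifies I have seen every element of $A$ below $w$. The point is that consecutiveness of the rank values is what lets me recognize when I've filled in a complete initial segment $\epsilon, 0, 1, \ldots, s_k$ of the rank sequence without gaps; once that initial segment extends past the rank of some $A$-element that is $\geqlex w$, strict monotonicity of $f$ on $A$ forbids any unseen $A$-element from hiding below $w$. One subtlety worth a sentence is that $f$ may behave arbitrarily (or diverge) on strings outside $A$, so in the dovetailed computation I must not wait forever for $f$ on a non-$A$ string; but since I only ever wait on strings actually produced by the enumerator $E$---all of which lie in $A$---every $f$-computation I depend on does converge, so no genuine waiting problem arises.
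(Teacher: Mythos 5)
Your proposal is correct and follows essentially the same route as the paper's proof: enumerate $A$, apply the partial ranker to each enumerated element (safe, since $f$ converges on all of $A$), and exploit the strict monotonicity and consecutiveness of ranks on $A$ to certify that no element of $A$ below the input remains unseen. The only cosmetic difference is that the paper halts on a local witness (two enumerated elements with consecutive ranks straddling the input, or an element above the input with rank $\epsilon$), whereas you wait to assemble a complete initial segment of ranks; both termination conditions are guaranteed to occur for infinite $A$ and both are sound.
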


\begin{proof}%
This proof is similar in flavor to the proof that comprises 
the whole of 
Appendix~\ref{sapp:defs}, except in that proof but not here 
one has a model in which the ranker is not allowed to output 
``lies'' as to the rank of nonmembers of the set, and here but not 
there we have the assumption that the set is r.e.

Let $A$ be an r.e.\ $\fre$\trank\ set.  
Let $f$ be an $\fre$-ranking function for $A$.
Since $A$ is r.e., there exists an 
enumerating Turing machine, $E$, for $A$, and without loss of 
generality, we assume that $E$ enumerates the elements of $A$ without
repetition.  
If $A$ is finite, then $A$ is recursive, so we in the 
following consider just the case that $A$ is infinite.
Here is our algorithm, which will always halt, to 
decide membership in $A$.  On arbitrary input $x$, for which we wish 
to test whether $x \in A$, start running the enumerating machine $E$.
Each time the machine outputs an element, run $f$ on that element to 
determine the correct rank of that element (since the elements output by $E$
all belong to $A$, $f$ halts on each and outputs the correct rank value, e.g.,
if the string is the seventh string in $A$, the function $f$ will output the 
lexicographically seventh string in $\sigmastar$).  Each time we thus obtain 
a rank, check to see if either: (a)~the string just output by $E$ is $x$, 
in which case accept as $x\in A$, (b)~the ranker has mapped some string $y$
output by $E$ and satisfying $y \glex x$ to the string $\epsilon$ (i.e.,
has declared it to be the lexicographically least string in $A$), in which
case reject as $x \not\in A$, or (c)~the ranker has mapped some two strings,
$y$ and $y'$---such that $y \llex x \llex  y'$
and both $y$ and $y'$ have by now have been output by $E$---to 
outputs $f(y)$ and $f(y')$
such that $f(y')$ is the lexicographical successor in $\sigmastar$ of $f(y)$,
in which case reject as $x \not\in A$.
\end{proof}

\begin{corollary}\label{c:25.3BK54}
No $\re$-complete set is $\fre$\trank.
\end{corollary}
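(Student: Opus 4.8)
The plan is to derive this immediately from Theorem~\ref{t:25.2BK54} together with the classical undecidability of $\K$. First I would observe that any RE-complete set $A$ is, by definition, a member of $\re$. Hence, by Theorem~\ref{t:25.2BK54} (specifically the equality $\rec = \re \cap \fre\trank$), if $A$ were $\fre$\trank\ then $A$ would be recursive. So it suffices to show that no RE-complete set is recursive.

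To see this, recall that RE-completeness here is with respect to many-one recursive reductions, so every RE-complete set $A$ satisfies $\K \leq_m A$, where $\K = \{x \condition x \in L(M_x)\}$. The recursive sets are closed downward under $\leq_m$. Thus if $A$ were recursive, $\K$ would be recursive, contradicting the well-known undecidability of the halting problem. Therefore $A$ is not recursive, and so by the previous paragraph $A$ is not $\fre$\trank.

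I do not expect any genuine obstacle here: the corollary is a direct consequence of Theorem~\ref{t:25.2BK54} and the classical fact that the RE-complete sets are undecidable. The only care needed is in the (entirely routine) chain ``$A$ RE-complete $\Rightarrow$ $A \in \re$; $A \in \re$ and $A \in \fre\trank$ $\Rightarrow$ $A \in \rec$; $A \in \rec$ and $\K \leq_m A$ $\Rightarrow$ $\K \in \rec$, a contradiction.''
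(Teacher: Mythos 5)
Your proposal is correct and is exactly the argument the paper intends: the corollary is stated as an immediate consequence of Theorem~\ref{t:25.2BK54}, since every RE-complete set is r.e.\ but (being $\geq_m$-above $\K$) not recursive. Nothing further is needed.
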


Let us now turn to seeing how the co-r.e.\ sets---especially the
coRE-complete sets---interact with 
$\frec$-rankability and
$\fre$-rankability.  

First, though, let us notice that 
for the co-r.e.\ sets, 
$\frec$-rankability and
$\fre$-rankability precisely coincide
(though unlike the case---see Theorem~\ref{t:25.2BK54}---of the RE sets, that
as shown by Corollary~\ref{c:167.3-dan} 
is not due to them both collapsing to the recursive sets).

\begin{theorem}\label{t:169.1}
$\core \cap \frec\trank = 
\core \cap \fre\trank$.
\end{theorem}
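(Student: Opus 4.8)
The plan is to prove the nontrivial inclusion $\core \cap \fre\trank \subseteq \core \cap \frec\trank$; the reverse inclusion is immediate from Proposition~\ref{p:housekeeping-1} (or simply from the fact that every total recursive function is partial recursive). So fix a set $A \in \core$ that is $\fre$-rankable, and let $f \in \fre$ be a ranking function for $A$. Since $A$ is co-r.e., there is a partial recursive procedure $P$ that, on input $x$, halts if and only if $x \in \overline{A}$. I would build a total recursive ranking function $g$ for $A$ by dovetailing: on input $x$, run $f(x)$ and $P(x)$ in parallel, advancing each by one step at a time, and stop as soon as one of them halts. If $f(x)$ halts first (ties, which as noted below cannot matter, broken in favor of $f$), output the value $f(x)$ produced; if instead $P(x)$ halts first, output $\epsilon$ (any fixed string would do).

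It then remains to check that $g$ is total and is a ranking function for $A$. Totality: if $x \in A$ then $f(x)$ is defined (since $\domain(f) \supseteq A$), so the $f$-branch halts; and if $x \notin A$ then $x \in \overline{A}$, so the $P$-branch halts. Hence in every case at least one branch halts and $g(x)$ is defined. Correctness on members: if $x \in A$, then $x \notin \overline{A}$, so $P(x)$ never halts, which forces $g$ to return the value of $f(x)$; and since $f$ is a ranking function for $A$ and $x \in A$, that value is exactly $\|A^{\leq x}\|$, as required by the definition of ranking function. On inputs in $\overline{A}$ the definition of ranking function asks nothing beyond that $g$ be defined there (the function is permitted to ``lie''), and we have already secured that. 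Thus $g$ is a total recursive ranking function for $A$, witnessing $A \in \core \cap \frec\trank$.

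The argument has essentially no hard step; the one point worth flagging is \emph{why} co-r.e.-ness is precisely what makes it work. The $\fre$-ranking function $f$ is trustworthy exactly on $A$ and may diverge on $\overline{A}$, so promoting it to a total function requires a way to recognize membership in $\overline{A}$ and then ``bail out'' on those inputs; semi-decidability of $\overline{A}$ supplies exactly that halting witness, and the leniency of the ``ranking'' notion on $\overline{A}$ (in the sense of Footnote~\ref{f:other-ranking}) is what lets the bail-out value be meaningless. This is also why one should not expect the analogue to hold for the stricter ranking variants of Footnote~\ref{f:other-ranking}, where the ranker must certify non-membership on $\overline{A}$: that variant already collapses both classes to $\rec$, so the coincidence there is of a different and less interesting kind than the one asserted here.
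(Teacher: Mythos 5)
Your proof is correct and follows essentially the same route as the paper's: dovetail the $\fre$-ranker with a semi-decision procedure for $\overline{A}$, output the ranker's value if it halts first and a fixed dummy string otherwise. Nothing further is needed.
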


\begin{proof}%
  Let $A$ be a set in $\core \cap \fre\trank$.  We give an
  $\frec$-ranker for $A$.  Namely, on input $x$, run both the
  $\fre$-ranker $f$ and an enumerator for $\overline{A}$, dovetailed,
  until we either get a value for $f(x)$ from the ranker or we see the
  enumerator state that $x \not\in A$.  If the former, output that value,
  and if the latter, output any fixed string, e.g., 101010.
\end{proof}

Next we give the following theorem, which implies
us our desired result about coRE-complete sets, and more.

\begin{theorem}\label{t:167.1-dan}
If $A$ is an $\fre$\trank\ co-r.e.\ set that has an 
infinite r.e.\ 
subset, then $A\in\rec$.
\end{theorem}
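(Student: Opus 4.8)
The plan is to exhibit a total algorithm that decides membership in $A$, and thereby conclude $A\in\rec$. The algorithm will draw on three resources: an enumerator $E_{\overline{A}}$ for $\overline{A}$ (available since $A$ is co-r.e.), an enumerator $E_S$ for a fixed infinite r.e.\ subset $S\subseteq A$ (available by hypothesis), and an $\fre$-ranking function $f$ for $A$ (available since $A$ is $\fre$\trank).

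On input $x$, I would first run $E_S$ until it outputs a string $y$ with $y\geqlex x$; this halts, since $S$ is infinite and hence contains infinitely many strings that are lexicographically $\geqlex x$. Because $y\in S\subseteq A$ and $\domain(f)\supseteq A$, running $f$ on $y$ halts and, by the defining property of a ranking function, returns the string whose position in lexicographic order is $k:=\|A^{\leq y}\|$; using the standard recursive correspondence between strings and natural numbers, I recover the number $k$ from $f(y)$. The key point is that $f$ is applied only to a string already certified to lie in $A$, so we never risk $f$ diverging or returning an irrelevant value.

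Next I would compute $N:=\|(\sigmastar)^{\leq y}\|$, which is a purely mechanical count, and set $m:=N-k=\|\overline{A}^{\leq y}\|$. Then run $E_{\overline{A}}$, discarding every output that is $\glex y$, until exactly $m$ distinct strings $\leqlex y$ have appeared; this halts because $\overline{A}^{\leq y}$ is finite of cardinality exactly $m$ and $E_{\overline{A}}$ eventually lists each of its members (if $m=0$ the phase is vacuous). At that point the collected strings are precisely $\overline{A}^{\leq y}$, so $A^{\leq y}=(\sigmastar)^{\leq y}\setminus\overline{A}^{\leq y}$ is known explicitly; and since $x\leqlex y$ we have $x\in A$ iff $x\in A^{\leq y}$, so I accept or reject accordingly.

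Every phase of this procedure halts, so it decides $A$, giving $A\in\rec$. I do not anticipate a genuine obstacle: the only point requiring care is the (easy) justification, given above, that the single call to $f$ is safe because its argument was drawn from the r.e.\ subset $S\subseteq A$; everything else is routine bookkeeping—converting between the rank string $f(y)$ and the integer $k$, counting $(\sigmastar)^{\leq y}$, and disposing of the degenerate case $m=0$.
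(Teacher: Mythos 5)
Your proof is correct and is essentially the paper's own argument: find a string $y\in S\subseteq A$ above $x$, use the ranker on $y$ to determine how many members of $\overline{A}$ lie at or below $y$, enumerate exactly that many via the enumerator for $\overline{A}$, and check whether $x$ is among them. The only (minor) difference is that the paper first invokes Theorem~\ref{t:169.1} to replace the $\fre$-ranker by a total recursive one, whereas you sidestep that by observing the ranker is only ever applied to a string already certified to be in $A$.
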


\begin{proof}%
Let $A$ be as in the theorem's hypothesis.
Let $s_0, s_1, s_2...$~enumerate $\sigmastar$ in lexicographical order. 
Let $E$ be an enumerating 
Turing machine 
without repetitions for $\overline{A}$ and 
let $F$ be an enumerating Turing machine 
for
an infinite r.e.\ subset of $A$. 
Suppose $g$ is a ranking function for $A$. 
In light of Theorem 10, $A$ is $\fre$\trank\ $\iff A$ is $\frec$\trank, so  
w.l.o.g.~we 
assume that $g \in \frec$.

Then the following procedure decides whether $x \in A$. 
Run $F$ until it enumerates some string $s_n \glex x$. 
Compute $g(s_n)$. 
Since $s_n \in A$, 
$(\sigmastar)^{\leq s_n}$ 
is composed of $g(s_n)$ members of $A$ and $n-g(s_n)$ members of 
$\overline{A}$.
Run $E$ until it enumerates $n-g(s_n)$ strings in $(\sigmastar)^{\leq s_n}$.
If %
$x$ is one of those 
$n-g(s_n)$ strings, then 
we know that $x \not \in A$, and 
otherwise we know that $x \in A$.
\end{proof}

\begin{corollary}\label{c:167.3-dan}
No $\core$-complete set is $\fre$-rankable.
\end{corollary}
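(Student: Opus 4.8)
The plan is to derive this as an immediate consequence of Theorem~\ref{t:167.1-dan}. First I would observe that every $\core$-complete set is, in particular, co-r.e., so the only hypothesis of Theorem~\ref{t:167.1-dan} that needs checking is that the set has an infinite r.e.\ subset. So suppose, for contradiction, that some $\core$-complete set $A$ is $\fre$-rankable. To apply Theorem~\ref{t:167.1-dan} I need to exhibit an infinite r.e.\ subset of $A$.

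The key step is the following: exhibit a single concrete infinite r.e.\ set $R$ that many-one reduces to \emph{every} $\core$-complete set via a reduction we can pick to land inside $A$, or—more directly—argue that any $\core$-complete set must contain an infinite r.e.\ subset. One clean way: $\overline{\K}$ is co-r.e., hence $\overline{\K} \leq_m A$ via some recursive $h$; then $h(\overline{\K}) \subseteq A$. But $h(\overline{\K})$ need not be r.e. Instead I would use an infinite r.e.\ \emph{subset} of $\overline{\K}$. Does $\overline{\K}$ have an infinite r.e.\ subset? Not necessarily—$\overline{\K}$ could be immune is false, actually $\overline\K$ is not immune since it is productive... hmm, a productive set does contain an infinite r.e.\ subset (in fact productivity gives infinitely many elements recursively). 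So $\overline{\K}$ has an infinite r.e.\ subset $S$; then $h(S)$ is an infinite r.e.\ subset of $A$ (infinite because $h$ restricted to $\overline\K$ is one-to-one without loss of generality, or because $\leq_m$ reductions to coinfinite sets may be taken injective, or simply: $h(S)$ r.e.\ and if finite then $S \setminus h^{-1}(\text{that finite set})$ still infinite r.e.\ and maps into... ). Actually the simplest route avoids $\K$ entirely: every infinite $\core$ set whose complement is r.e.\ and \emph{not} recursive is productive on its complement side, but what we want is an r.e.\ subset of $A$ itself. Since $A$ is $\core$-complete, $A \equiv_m \overline{\K}$, so $A$ is productive... no, $\overline\K$ is productive and $A$ is many-one equivalent to it, and productivity is preserved downward under $\leq_m$ to coinfinite... the cleanest statement I would invoke: every productive set contains an infinite r.e.\ subset, and every $\core$-complete set is productive (being $\equiv_m \overline\K$), hence contains an infinite r.e.\ subset.

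So the proof reads: Let $A$ be $\core$-complete. Then $A$ is productive (since $A \equiv_m \overline{\K}$ and $\overline{\K}$ is the canonical productive set, and productivity transfers along many-one equivalence). Every productive set contains an infinite r.e.\ subset. Since $A$ is also co-r.e., Theorem~\ref{t:167.1-dan} applies to give $A \in \rec$—contradicting that $A$ is $\core$-complete and hence nonrecursive. Therefore no $\core$-complete set is $\fre$-rankable.

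The main obstacle I anticipate is nailing down the cleanest citation for ``productive $\Rightarrow$ has an infinite r.e.\ subset'' and for ``$\core$-complete $\Rightarrow$ productive'' without belaboring standard recursion theory; both are textbook facts (e.g., in Odifreddi or Soare), so in the write-up I would simply cite them rather than reprove them. A secondary subtlety is that Theorem~\ref{t:167.1-dan} is stated for $\fre$-rankable sets, and one should note it covers the $\frec$ case too by Proposition~\ref{p:housekeeping-1}; but the corollary as stated already targets $\fre$-rankability, so this is automatic.
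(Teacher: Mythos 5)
Your proposal is correct and takes essentially the same route as the paper: both derive the corollary directly from Theorem~\ref{t:167.1-dan} combined with the fact that every $\core$-complete set has an infinite r.e.\ subset. The only (immaterial) difference is how that fact is justified---the paper cites Post's result (noting it also follows from Myhill's Corollary), while you obtain it via productivity of $\core$-complete sets; both are standard.
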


\begin{proof}%
  This follows directly from Theorem~\ref{t:167.1-dan}, in light of
  Post's~\cite{pos:j:re} early result that every coRE-complete set has
  an infinite r.e.\ subset.\footnote{Post's result is trivial to see
    these days, using the fact (what we are calling Myhill's Corollary) 
    that all coRE-complete sets are
    isomorphic, but Post didn't have the benefit of Myhill's Corollary
    and thus proved his result directly.  We note in passing the
    following slight extension of Post's result, since we could not
    find it in the existing literature: Every coRE-complete set has an
    infinite $\re - \rec$ subset.  To see this, just note that the 
    coRE-complete set $\overline{\K}\oplus\sigmastar$
    has the infinite $\re-\rec$ subset $\{1y\condition y\in\K\}$,
    and so by Myhill's Corollary the claim follows.}
\end{proof}

\begin{corollary}\label{c:cylinder}
Every 
$\fre$\trank\, co-r.e.\ cylinder is recursive.
\end{corollary}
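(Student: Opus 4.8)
The plan is to reduce the claim directly to Theorem~\ref{t:167.1-dan}, which already asserts that an $\fre$-rankable co-r.e.\ set that has an infinite r.e.\ subset must be recursive. So the only thing that still needs checking is that a co-r.e.\ cylinder, unless it is trivially recursive, always contains an infinite r.e.\ (indeed recursive) subset; phrased differently, the entire content of the corollary is the easy observation that a nontrivial cylinder is never immune. This is in the same spirit as the way Corollary~\ref{c:167.3-dan} was obtained from Theorem~\ref{t:167.1-dan} using Post's result on infinite r.e.\ subsets of coRE-complete sets.

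First I would dispose of the degenerate case: if $A=\emptyset$ then $A\in\rec$ and we are done. Otherwise $A\neq\emptyset$, and since $A$ is a cylinder there are a set $B$ and a total recursive bijection $f$ from $\sigmastar$ onto $\sigmastar$ with $f(A)=B\times\sigmastar$ (where, as usual, $B\times\sigmastar$ is read inside $\sigmastar$ via the fixed pairing function, i.e., $B\times\sigmastar=\{\pair{b,y}\condition b\in B,\ y\in\sigmastar\}$). The inverse $f^{-1}$ is again a total recursive bijection (to compute $f^{-1}(y)$, search $\sigmastar$ in lexicographical order for the unique $x$ with $f(x)=y$), and $f^{-1}(B\times\sigmastar)=A$. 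From $A\neq\emptyset$ we get $B\times\sigmastar\neq\emptyset$, hence $B\neq\emptyset$; fix any $b\in B$. Then $\{b\}\times\sigmastar=\{\pair{b,y}\condition y\in\sigmastar\}$ is an infinite recursive subset of $B\times\sigmastar$, so $f^{-1}\bigl(\{b\}\times\sigmastar\bigr)$ is an infinite recursive subset of $A$: it is recursive because $f^{-1}$ is recursive and membership in the preimage is tested by applying $f$, and it is infinite because $f^{-1}$ is injective.

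With an infinite r.e.\ (in fact recursive) subset of $A$ in hand, and $A$ being co-r.e.\ and $\fre$-rankable by hypothesis, Theorem~\ref{t:167.1-dan} applies verbatim and gives $A\in\rec$. The only point that deserves a moment's care is that $B$ itself may be arbitrarily complex, so we cannot effectively produce the element $b\in B$; but we need only the \emph{existence} of such a $b$, after which $\{b\}\times\sigmastar$ is a perfectly concrete recursive set, so no real obstacle arises. Everything difficult is already packed into Theorem~\ref{t:167.1-dan}.
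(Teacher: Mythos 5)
Your proof is correct and is essentially the paper's own argument: the paper likewise fixes an element $x\in B$, observes that the preimage of $\{x\}\times\sigmastar$ under the isomorphism is an infinite recursive subset of the cylinder, and then invokes Theorem~\ref{t:167.1-dan}. Your handling of the degenerate case (noting that the only finite cylinder is $\emptyset$) is a slight streamlining of the paper's remark that finite co-r.e.\ sets are recursive, but the substance is identical.
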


\begin{proof}%
Each finite co-r.e.\ set is recursive.
Each infinite co-r.e.\ cylinder has an infinite 
recursive subset (if the cylinder is 
recursively isomorphic to $B \times \sigmastar$
via recursive isomorphism function $h$,
then for any fixed $x\in B$, we have that
the set $h^{-1}(\{ (x,y) \condition y\in\sigmastar\})$ 
is such an infinite recursive set), and so we are done 
by Theorem~\ref{t:167.1-dan}.
\end{proof}

Though by Theorem~\ref{t:103.1} there are $\frec$-rankable sets in the
1-truth-table degree of $\K$,
we also know
that none of those sets can be RE-complete or coRE-complete.
The impossibility of them being coRE-complete follows from 
Corollary~\ref{c:167.3-dan}, which indeed precludes 
even 
$\fre$-rankability.
The impossibility of them being RE-complete will follow from 
the coming Corollary~\ref{c:111.15}, which indeed precludes even 
$\frec$-compressibility.  We state this as the following corollary.

\begin{corollary}
Although the 1-truth-table degree of the $\re$-complete sets contains 
$\frec$\trank\ sets, no $\re$-complete or $\core$-complete sets 
are 
$\frec$\trank.
\end{corollary}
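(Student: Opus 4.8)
The plan is to assemble this corollary directly from results already in hand rather than proving anything new. The statement has two halves: first, that the 1-truth-table degree of the $\re$-complete sets contains $\frec$\trank\ sets, and second, that no $\re$-complete or $\core$-complete set is $\frec$\trank. The first half is an immediate instance of Theorem~\ref{t:103.1}: every 1-truth-table degree contains an $\frec$-rankable set, so in particular the 1-truth-table degree of $\K$ does. (One should note that the $\re$-complete sets all lie in a single 1-truth-table degree, since they form a single many-one degree and many-one reducibility refines 1-truth-table reducibility; this is why one may speak of ``the'' 1-truth-table degree of the $\re$-complete sets.)

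For the second half, I would handle the two cases separately. The $\core$-complete case follows from Corollary~\ref{c:167.3-dan}, which states that no $\core$-complete set is $\fre$-rankable; since $\frec\trank \subseteq \fre\trank$ by Proposition~\ref{p:housekeeping-1}, a fortiori no $\core$-complete set is $\frec$\trank. The $\re$-complete case follows from Corollary~\ref{c:111.15} (referenced in the excerpt as forthcoming), which precludes even $\frec$-compressibility of $\re$-complete sets; since $\frec\trank \subseteq \frec\text{-compressible}'$ and an infinite $\re$-complete set cannot be finite, $\frec$-rankability would imply $\frec$-compressibility, a contradiction. Alternatively, one could route the $\re$-complete case through Corollary~\ref{c:25.3BK54}, which already says no $\re$-complete set is $\fre$\trank, hence none is $\frec$\trank---this is in fact the cleaner path and avoids forward-referencing Corollary~\ref{c:111.15}.

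There is essentially no obstacle here: the corollary is a bookkeeping consolidation of Theorem~\ref{t:103.1}, Corollary~\ref{c:25.3BK54} (or Corollary~\ref{c:111.15}), and Corollary~\ref{c:167.3-dan}. The only thing requiring a word of care is the implicit claim that ``the 1-truth-table degree of the $\re$-complete sets'' is well-defined, i.e., that all $\re$-complete sets share one 1-tt degree; I would dispatch that with the one-line remark above about many-one reducibility refining 1-tt reducibility, or simply take it as understood from the surrounding discussion. So the proof is just: the first assertion is Theorem~\ref{t:103.1} applied to $\deg_{\textit{1-tt}}(\K)$; the non-$\frec$-rankability of $\re$-complete sets is Corollary~\ref{c:25.3BK54} together with $\frec\trank \subseteq \fre\trank$; and the non-$\frec$-rankability of $\core$-complete sets is Corollary~\ref{c:167.3-dan} together with the same containment.
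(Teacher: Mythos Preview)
Your proposal is correct and essentially mirrors the paper's own treatment: the paper also derives the first half from Theorem~\ref{t:103.1} and the $\core$-complete case from Corollary~\ref{c:167.3-dan}. The one small difference is that for the $\re$-complete case the paper actually routes through the forward-referenced Corollary~\ref{c:111.15} (no $\re$-complete set is $\frec$-compressible), whereas you identify Corollary~\ref{c:25.3BK54} as the cleaner, non-forward-referencing alternative; both are valid, and your observation that Corollary~\ref{c:25.3BK54} suffices is a reasonable stylistic improvement.
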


\section{Compression}\label{s:compression}
Proposition~\ref{p:106.1} shows that every infinite
r.e.\ set is $\fre$-compressible.  We note in passing 
that from that and 
Theorem~\ref{t:25.2BK54}
we immediately have the following.
\begin{proposition}\label{p:122.1}
There exist r.e.\ sets---in fact, all of $\re - \rec$---that 
are $\fre\tcomp$ yet are not 
$\frec\trank$ or even $\fre\trank$.
\end{proposition}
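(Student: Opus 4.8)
The plan is to obtain the proposition directly from the two ingredients already established, namely Proposition~\ref{p:106.1} (every infinite r.e.\ set is $\fre$-compressible) and Theorem~\ref{t:25.2BK54} (every r.e.\ $\fre\trank$ set is recursive). First I would note that $\re - \rec$ is nonempty, witnessed for instance by $\K$, so the class of sets named in the statement is not vacuous. Then I would fix an arbitrary $A \in \re - \rec$ and observe that $A$ must be infinite, since every finite set is recursive while $A$ is not; hence $A$ is an \emph{infinite} r.e.\ set, and Proposition~\ref{p:106.1} applies to give that $A$ is $\fre\tcomp$.

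For the negative half I would argue by contraposition from Theorem~\ref{t:25.2BK54}: that theorem yields $\re \cap \fre\trank = \rec$, so if $A$ were $\fre\trank$ then $A$ would be recursive, contradicting $A \in \re - \rec$. Thus $A$ is not $\fre\trank$. Finally, since $\frec\trank \subseteq \fre\trank$ by Proposition~\ref{p:housekeeping-1}, $A$ is a fortiori not $\frec\trank$ either. Putting the two halves together gives that every set in $\re - \rec$ is $\fre\tcomp$ but neither $\frec\trank$ nor $\fre\trank$, which is exactly the proposition.

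There is no genuine obstacle here; the proposition is essentially a repackaging of prior results, which is why the excerpt flags it as ``immediate.'' The only point that even needs a word is the trivial observation that membership in $\re - \rec$ forces infiniteness, so that Proposition~\ref{p:106.1} (whose hypothesis explicitly restricts to infinite sets) is legitimately applicable and we do not need to retreat to the finite-set-tolerant variant $\fre\tcomp'$; beyond that, everything is a direct citation.
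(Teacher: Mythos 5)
Your proposal is correct and matches the paper's own derivation, which likewise obtains the proposition immediately from Proposition~\ref{p:106.1} and Theorem~\ref{t:25.2BK54}. The one detail you spell out---that membership in $\re - \rec$ forces infiniteness, so Proposition~\ref{p:106.1} applies---is exactly the right (and only) point needing mention.
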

We will soon see that 
in that proposition
$\fre$\tcomp\ cannot 
be improved to $\frec$\tcomp.

The following result shows that for $\frec$ compression (and even for
Logspace compression, if one looks inside the proof of
Theorem~\ref{t:103.1}), compressible sets 
exist in every 1-truth-table degree.
(This result is a corollary to the proof of 
Theorem~\ref{t:103.1}---it follows, in light of 
Proposition~\ref{p:housekeeping-1}'s part~\ref{p:housekeeping-1-a}, 
from the fact that the sets constructed
in the proof of 
Theorem~\ref{t:103.1}
are infinite.)

\begin{corollary}\label{c:107.5}
Every 1-truth-table degree 
contains an $\frec$\tcomp\ set.
\end{corollary}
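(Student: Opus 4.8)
The plan is to derive Corollary~\ref{c:107.5} directly from the construction already carried out in the proof of Theorem~\ref{t:103.1}, invoking Proposition~\ref{p:housekeeping-1}'s part~\ref{p:housekeeping-1-a} to pass from rankability to compressibility. Recall that in that proof, given an arbitrary set $A\subseteq\sigmastar$, the set $A\oplushat\overline{A}$ was shown to be $\frec$-rankable, and it is easy to check that $A\oplushat\overline{A}\equiv_{\textit{1-tt}}A$ (each membership query about $A\oplushat\overline{A}$ reduces, via examining the low-order bit of the queried string, to a single positive or negated query about $A$, and conversely ``$x\in A$?'' is answered by the single query ``$x0\in A\oplushat\overline{A}$?''). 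So the set $A\oplushat\overline{A}$ lands in the same 1-truth-table degree as $A$.

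First I would observe that $A\oplushat\overline{A}$ is infinite: for \emph{every} string $x$, exactly one of $x0,x1$ lies in $A\oplushat\overline{A}$, so $A\oplushat\overline{A}$ contains at least one string of each length $n+1$ and is therefore infinite regardless of what $A$ is. This is the only genuinely new observation needed, and it is immediate. Then, since $A\oplushat\overline{A}$ is an infinite $\frec$-rankable set, part~\ref{p:housekeeping-1-a} of Proposition~\ref{p:housekeeping-1} tells us $A\oplushat\overline{A}\in\frec\text{-compressible}'$, and being infinite it is not one of the finite sets that $\frec\text{-compressible}'$ adds to $\frec\text{-compressible}$; hence $A\oplushat\overline{A}$ is genuinely $\frec$-compressible. (Alternatively one can just note that the explicit rank function $f$ from the proof of Theorem~\ref{t:103.1} is itself visibly a compression function for the infinite set $A\oplushat\overline{A}$.)

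Finally I would conclude: given an arbitrary 1-truth-table degree, pick any representative $A$; then $A\oplushat\overline{A}$ is an $\frec$-compressible set in that same degree. Since the degree was arbitrary, every 1-truth-table degree contains an $\frec$-compressible set, which is exactly Corollary~\ref{c:107.5}. There is no real obstacle here; the only point that warrants a sentence of care is the infinitude of $A\oplushat\overline{A}$, because $\frec\text{-compressible}$ excludes finite sets by definition, so one must confirm the witnessing set is infinite before the implication ``rankable $\Rightarrow$ compressible'' is useful. Everything else is a direct appeal to results already in hand.
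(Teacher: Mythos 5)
Your proposal is correct and is essentially identical to the paper's own justification: the paper states that Corollary~\ref{c:107.5} follows from the proof of Theorem~\ref{t:103.1} together with Proposition~\ref{p:housekeeping-1}'s part~\ref{p:housekeeping-1-a}, using exactly the observation that the constructed sets $A\oplushat\overline{A}$ are infinite. Your extra sentence verifying infinitude (one of $x0,x1$ is always in the set) is the same point the paper relies on, just spelled out explicitly.
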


Can we improve Proposition~\ref{p:122.1}'s claim from $\fre$\tcomp\ to $\frec$\tcomp?
Can we improve Corollary~\ref{c:107.5}'s claim from 1-truth-table
degrees to many-one degrees (to avoid this being trivially 
impossible due to the pathological many-one degree that 
contains only the empty set, what we actually are asking is 
whether we can change Corollary~\ref{c:107.5} to 
``every many-one degree other than 
$\{\emptyset\}$
contains 
an $\frec$\tcomp\ set'')
or, and this would not be an improvement
but rather would be an incomparable claim, can we change the claim as
just mentioned to 
all many-one degrees other than 
$\{\emptyset\}$
if we in addition 
restrict our attention just to the r.e.\ degrees?
Or can we perhaps hope to show that 
$\re \subseteq \frec\tcomp'$?
The following result, observed without proof in the 
conclusions section of~\cite{gol-hem-kun:j:address},
implies 
that the answer to each of these questions
is ``no'';
$\frec$ compression 
is impossible for sets in $\re - \rec$.

\begin{theorem}[\cite{gol-hem-kun:j:address}]
\label{t:109.2}
$\rec = \re \cap \frec\tcomp'$.
\end{theorem}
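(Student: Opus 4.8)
The plan is to prove the two inclusions separately. The easy direction is $\rec \subseteq \re \cap \frec\tcomp'$. Clearly every recursive set is r.e. For the compression part, given a recursive set $A$: if $A$ is finite it lies in $\frec\tcomp'$ by definition of the primed class, and if $A$ is infinite, we can define a total recursive ranking function for $A$ by brute force (on input $x$, first decide whether $x\in A$ using the decision procedure for $A$; if $x\notin A$ output, say, $\epsilon$; if $x\in A$, scan $\epsilon, 0, 1, 00, \ldots$ up through $x$, counting the members of $A$ encountered, and output the string whose index is that count), so $A$ is $\frec$-rankable, hence $\frec$-compressible by Proposition~\ref{p:housekeeping-1}'s part~\ref{p:housekeeping-1-a}. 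Either way $A\in\frec\tcomp'$.

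The substantive direction is $\re \cap \frec\tcomp' \subseteq \rec$. Let $A$ be an r.e.\ set in $\frec\tcomp'$. If $A$ is finite it is recursive and we are done, so assume $A$ is infinite; then $A\in\frec\tcomp$, so fix a \emph{total} recursive compression function $f$ for $A$, so that $\domain(f)=\sigmastar$, $f(A)=\sigmastar$, and $f$ is injective on $A$. Also fix an enumerating Turing machine $E$ for $A$ (w.l.o.g.\ without repetitions). The idea for deciding membership of an input $x$ is a dovetailing/surjectivity argument: since $f$ restricted to $A$ is a bijection onto $\sigmastar$, \emph{every} string is hit, and in particular, enumerating enough of $A$ and applying $f$ will reveal whether $x$ gets its $f$-value from inside $A$ in a way that is consistent with $x\in A$. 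Concretely, I would enumerate $A$ via $E$ and compute $f$ on each enumerated element; since $f(A)=\sigmastar$, eventually some enumerated element $y$ satisfies $f(y)=f(x)$. Two cases: if at that point $y=x$, accept; and crucially, once such a $y\ne x$ with $f(y)=f(x)$ appears, we may reject — because if $x$ were in $A$ then $f$ would be injective on $A$, giving $f(x)\ne f(y)$, a contradiction. This procedure halts on every input (the witness $y$ with $f(y)=f(x)$ is guaranteed to be enumerated since $f$ is onto $A$ and $E$ exhausts $A$), and it is correct, so $A$ is recursive.

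The step I expect to be the main obstacle is nailing down the halting-and-correctness argument in the second direction, i.e.\ convincing oneself that the guaranteed preimage of $f(x)$ under $f|_A$ really is found by the enumeration and that the injectivity of $f$ \emph{on $A$} (but possibly not globally — $f$ may ``lie'' on $\overline A$) is exactly what licenses the rejection when $y\ne x$. The resolution is the observation above: the preimage exists because $f(A)=\sigmastar$, it is unique because $f$ is 1-to-1 on $A$, and so the first enumerated $y$ with $f(y)=f(x)$ is the unique element of $A$ mapped to $f(x)$; $x\in A$ iff that $y$ equals $x$. Everything else — handling the finite case, the two inclusions' bookkeeping, and the role of the primed class — is routine.
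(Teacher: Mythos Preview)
Your proof is correct and uses essentially the same idea as the paper's: the paper shows $\overline{A}\in\re$ via the characterization $\overline{A}=\{x\mid(\exists y)[y\in A\land y\neq x\land f(y)=f(x)]\}$ and then concludes $A\in\re\cap\core=\rec$, whereas you unpack this same existential witness into an explicit decision procedure that enumerates $A$ until the unique $A$-preimage of $f(x)$ is found. The key insight---surjectivity of $f|_A$ guarantees the witness, injectivity on $A$ makes the test decisive---is identical.
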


\begin{proof}%
The $\subseteq$ direction is immediate.
Let us show the $\supseteq$ direction.  Let $A\in \re\cap \frec\tcomp'$.
If $A$ is finite then certainly $A\in\rec$, so only the case of infinite 
$A$ remains.  
$A\in\core$, since, where $f$ 
is the $\frec$-compressor function for our infinite r.e.\ set $A$, 
$$\overline{A} = \{x\condition (\exists y) [ y\in A \land y\neq x \land 
f(y)=f(x)]\}.$$
So $A$ is r.e.\ and co-r.e., and thus is recursive.
\end{proof}

\begin{corollary}\label{c:111.15}
  No set in $\re - \rec$ is $\frec$\tcomp.
  In particular, no $\re$-complete set is $\frec$\tcomp.
\end{corollary}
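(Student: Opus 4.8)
The plan is to obtain this as an immediate unwrapping of Theorem~\ref{t:109.2}, so the bulk of the work is really just a contrapositive argument plus a remark about completeness. First I would suppose, for contradiction, that some $A\in\re-\rec$ is $\frec$\tcomp. Since $A\in\re-\rec$, $A$ is infinite (every finite set is recursive), so here there is no gap between the primed and unprimed versions of compressibility: $A\in\frec\text{-compressible}\subseteq\frec\tcomp'$. Combining this with $A\in\re$ gives $A\in\re\cap\frec\tcomp'$, which by Theorem~\ref{t:109.2} equals $\rec$. Thus $A\in\rec$, contradicting $A\in\re-\rec$. Hence no set in $\re-\rec$ is $\frec$\tcomp.

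For the ``in particular'' clause, I would observe that any $\re$-complete set $S$ (completeness being with respect to $\leq_m$, per our conventions) satisfies $S\equiv_m\K$; since $\K\notin\rec$ and $\rec$ is closed downward under $\leq_m$, we get $S\notin\rec$, while trivially $S\in\re$. So $S\in\re-\rec$, and the first part applies to show $S$ is not $\frec$\tcomp.

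I do not expect any real obstacle here: the one thing to be careful about is the finite-set bookkeeping built into the definition of compressibility (the passage from $\frec\tcomp$ to $\frec\tcomp'$), but this is harmless because membership in $\re-\rec$ already forces infiniteness, and in any case a compressible set is infinite by definition. So the corollary is essentially just a restatement of the equality in Theorem~\ref{t:109.2}, specialized to nonrecursive r.e.\ sets.
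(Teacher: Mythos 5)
Your argument is correct and matches the paper exactly: the paper states that Corollary~\ref{c:111.15} follows immediately from Theorem~\ref{t:109.2} and gives no further proof, which is precisely the contrapositive unwrapping you describe. Your extra care about the primed versus unprimed versions of compressibility and the infiniteness of sets in $\re-\rec$ is sound and harmless.
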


Corollary~\ref{c:111.15} follows immediately from 
Theorem~\ref{t:109.2} and so needs no proof.
Corollary~\ref{c:111.15} brings out a clear asymmetry, regarding
compression, between
$\re$ and $\core$: no $\re$-complete set is 
$\frec$\tcomp, but as 
we will soon see Corollary~\ref{c:4.1/4.2-dan},
all coRE-complete sets are
$\frec$\tcomp.

Does Theorem~\ref{t:109.2} remain true if we change $\frec$ to $\fre$?
We already know that answer is ``no''; in fact, from
Proposition~\ref{p:106.1}
not only do we have that 
$\re \cap \fre\tcomp \not\subseteq \rec$,
we indeed have that 
$\re \cap \fre\tcomp' = \re$.

As promised above, although the RE-complete sets---indeed, all sets in
$\re - \rec$---are not $\frec$\tcomp, we will now establish that all
coRE-complete sets \emph{are} $\frec$\tcomp.  Although we can prove that
directly,\footnote{For the reader who might like a direct, simple proof
of 
Corollary~\ref{c:4.1/4.2-dan}, we include here such a construction.
Consider the set 
$A = \{\pair{x,\epsilon} \condition x \in \overline{\K}\}
\cup
\{ \pair{x,\nextstring(y)} \condition M_x(x)
\text{ accepts in exactly $y$ steps}\}$,
where $\nextstring(y)$ denotes the string immediately after $y$ 
in lexicographical order.  $A$ clearly is coRE-complete
and, since for each $x$ there is exactly one $y$ such that 
$\pair{x,y}\in A$, $A$ is $\frec$-compressible 
via the function $f(\pair{x,y}) = x$.  In fact, 
the proof of Theorem~\ref{t:dan-160522-Beta1} is simply a more 
flexible version of this idea.}
we will instead state a more general result (Theorem~\ref{t:dan-160522-Beta1})
of interest in
its own right, and which yields the claim as a corollary.

\begin{proposition}\label{p:isomorphisms}
$\frec$\tcomp\ and 
$\fre$\tcomp\ are each closed under recursive 
isomorphisms.
\end{proposition}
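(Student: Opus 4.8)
The plan is to prove both halves of the statement directly from the definitions, exploiting the fact that a recursive isomorphism is a total recursive bijection of $\sigmastar$ onto itself. Fix a set $A$ and a set $B$ with $A \equiv_{\textit{iso}} B$, say via a total recursive bijection $h$ with $h(A) = B$; note $h^{-1}$ is also a total recursive bijection. I will show that if $A$ is $\frec$-compressible then so is $B$, and symmetrically (by the same argument applied to $h^{-1}$) the reverse; this gives closure of $\frec\tcomp$, and the identical argument with ``partial recursive'' in place of ``total recursive'' throughout gives closure of $\fre\tcomp$. Since $h$ is a bijection on $\sigmastar$, it maps infinite sets to infinite sets and finite sets to finite sets, so the finite-set caveat in Definition~\ref{d:compressible} causes no trouble: $A$ is infinite iff $B$ is.

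The core step is the composition. Suppose $f$ is a compression function for $A$, witnessing $\frec$-compressibility (so $f \in \frec$), and define $g = f \circ h^{-1}$. Then $g \in \frec$ since it is the composition of two total recursive functions. I claim $g$ is a compression function for $B$. First, $\domain(g) = \sigmastar \supseteq B$ trivially since everything here is total (in the $\fre$ case, $\domain(g) = h(\domain(f)) \supseteq h(A) = B$ because $h$ is a bijection). Second, $g(B) = f(h^{-1}(B)) = f(A) = \sigmastar$. Third, for the injectivity-on-$B$ condition: if $b, b' \in B$ with $b \neq b'$, then $h^{-1}(b), h^{-1}(b') \in A$ and they are distinct because $h^{-1}$ is injective, so $f(h^{-1}(b)) \neq f(h^{-1}(b'))$ by the compression property of $f$ on $A$; that is, $g(b) \neq g(b')$. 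Hence $g$ witnesses that $B$ is $\frec$-compressible (resp.\ $\fre$-compressible). Running the same argument with the roles of $A$ and $B$ swapped, using $h$ in place of $h^{-1}$, shows the implication in the other direction, so the two classes each contain $B$ exactly when they contain $A$.

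I do not anticipate a genuine obstacle here; the statement is essentially bookkeeping, and the only points that need a moment's care are (i) checking that the three clauses of Definition~\ref{d:compressible}\ref{d:compressible:p1} are preserved under precomposition with a bijection --- which works precisely because an isomorphism is a \emph{bijection of all of $\sigmastar$}, not merely a bijection from $A$ to $B$, so it does not collapse or create collisions and it behaves well on $\overline{A}$ --- and (ii) noting in the $\fre$ case that $\domain(f\circ h^{-1}) = h(\domain(f))$, which again uses that $h$ is total and onto. I would therefore present the proof compactly, doing the $\frec$ case in full and remarking that the $\fre$ case is verbatim the same with ``total recursive'' replaced by ``partial recursive'' and the one domain computation as indicated.
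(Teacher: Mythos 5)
Your proof is correct, and it is exactly the routine composition argument the paper has in mind: the authors simply state that Proposition~\ref{p:isomorphisms} ``is immediate and needs no proof.'' Your writeup supplies the details they omit (precomposing the compression function with $h^{-1}$, which is total recursive since $h$ is a recursive bijection of $\sigmastar$, and checking the three clauses of Definition~\ref{d:compressible}), with no gaps.
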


Proposition~\ref{p:isomorphisms} is immediate
and needs no proof.

\begin{theorem}\label{t:dan-160522-Beta1}
Every $\core$ cylinder except $\emptyset$ is 
$\frec$\tcomp.
\end{theorem}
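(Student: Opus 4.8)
The plan is to start with a nonempty $\core$ cylinder $A$, write it as $A \equiv_{\textit{iso}} B \times \sigmastar$ for some set $B$, and then exploit the second coordinate as a ``padding'' dimension. By Proposition~\ref{p:isomorphisms}, $\frec\tcomp$ is closed under recursive isomorphisms, so it suffices to show that $B \times \sigmastar$ is $\frec$-compressible whenever it is nonempty and co-r.e. Since $A$ is co-r.e., so is $B \times \sigmastar$ (being recursively isomorphic to $A$), and an enumerator for $\overline{B \times \sigmastar}$ together with a simple projection argument shows $\overline{B}$ is r.e.; also $B \neq \emptyset$. So really the core task is: given a nonempty set $B$ with $\overline{B}$ r.e., compress $B \times \sigmastar$ by a total recursive function.

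Here is the idea for that compression. Fix an enumerator $E$ for $\overline{B}$. For the second coordinate I have all of $\sigmastar$ available, so for each ``surviving'' first-coordinate value $b \in B$ the slice $\{b\} \times \sigmastar$ is already a perfect copy of $\sigmastar$. The strategy on input $\pair{b,y}$ is to run $E$ and watch whether $b$ ever gets enumerated into $\overline{B}$; simultaneously count how many strings lexicographically $\leqlex b$ have \emph{not yet} been thrown out. Concretely, I would like to define $f(\pair{b,y})$ so that it maps $\{b\}\times\sigmastar$ bijectively onto $\{s : \pair{c,\cdot}\text{-block for some }c\}$ — but the cleanest version is: let $f(\pair{b,y}) = \pair{r(b), y}$ where $r$ is a ``running rank'' that tries to collapse $B$ down to an initial segment. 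The subtlety is that $r$ cannot know in advance which $b$'s survive. The fix is the standard trick of being lazy and revisable: I do \emph{not} demand that $f$ restricted to $B\times\sigmastar$ be a clean block-structured bijection; I only need it 1-to-1 on $B\times\sigmastar$ and onto $\sigmastar$. So I can afford to let $f(\pair{b,y})$ depend, via a halting search, on enough of $E$'s output to pin down $b$'s eventual status — but of course that search need not halt when $b\in B$.

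To make it halt always, I would instead build $f$ to be defined on all of $\sigmastar$ by a different, self-correcting scheme, in the spirit of the footnoted direct construction of Corollary~\ref{c:4.1/4.2-dan}: there one uses $A = \{\pair{x,\epsilon} : x\in\overline\K\} \cup \{\pair{x,\nextstring(y)} : M_x(x)\text{ accepts in exactly }y\text{ steps}\}$ and compresses via $f(\pair{x,y})=x$, the point being that each ``column'' $x$ contributes \emph{exactly one} element. I would mimic this: using the cylinder structure and co-r.e.-ness of $B$, recursively re-coordinatize $B\times\sigmastar$ so that, after the isomorphism, it takes the shape ``for each $x$, exactly one string $\pair{x,\cdot}$ is in the set,'' at which point projection onto the first coordinate is a total recursive compression function. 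The main obstacle — and the step I expect to require the real work — is precisely manufacturing that re-coordinatization: showing that every nonempty $\core$ cylinder is recursively isomorphic to (equivalently, by Myhill, 1-reduces both ways to, or more elementarily: can be recursively reshaped into) a set of the "one witness per column" form, using only an r.e.\ enumeration of the complement and the freedom the $\sigmastar$-factor grants. Once that structural normal form is in hand, the compression itself is a one-line projection and everything else is routine bookkeeping about $\frec$ membership.
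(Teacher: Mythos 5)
You have correctly located the target normal form --- a set in which each ``column'' $\pair{x,\cdot}$ contains exactly one element, so that projection onto the first coordinate is a total recursive compression function --- and this is exactly the shape the paper aims for. But your proof stops precisely at the step that constitutes the proof: you write that ``the main obstacle \ldots\ is precisely manufacturing that re-coordinatization'' and defer it as ``the real work.'' That re-coordinatization is not routine bookkeeping, and writing $A \equiv_{\textit{iso}} B\times\sigmastar$ and pointing at the padding coordinate does not produce it; your earlier ``running rank'' sketch, as you yourself note, does not halt on $b\in B$, and nothing in the proposal replaces it. So as written this is a genuine gap.

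Here is what fills it. Given a nonempty co-r.e.\ cylinder $A$ with an enumerator $E$ (without repetitions) for $\overline{A}$, define
$L_A = \{\pair{x,\epsilon} \condition x\in A\} \cup \{\pair{x,s_i} \condition i\geq 1 \text{ and } x \text{ is the } i\text{th string enumerated by } E\}$.
Each $x$ contributes exactly one pair to $L_A$ (the $\epsilon$-pair if $x\in A$, the pair recording its enumeration position if $x\in\overline{A}$), so $L_A$ is $\frec$-compressible by first-coordinate projection. The point you are missing is how to get back from $L_A$ to $A$: one shows $A\leq_1 L_A$ via $x\mapsto\pair{x,\epsilon}$ and $L_A\leq_m A$ by a direct computation (on $\pair{x,s_i}$ with $i\geq 1$, run $E$ until $i$ strings appear and output a fixed element of $A$ or of $\overline{A}$ accordingly), and then invokes the cylinder hypothesis in its \emph{reduction-theoretic} role --- many-one reducibility to a cylinder upgrades to one-one reducibility --- so that the Myhill Isomorphism Theorem yields $L_A\equiv_{\textit{iso}} A$ and Proposition~\ref{p:isomorphisms} finishes the argument. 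In your proposal the cylinder property is used only to write $A$ as $B\times\sigmastar$; the use that actually closes the loop, converting $\leq_m$ into $\leq_1$ so that Myhill applies, is absent, and without it you have no recursive isomorphism between your intended normal form and $A$.
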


\begin{proof}%
Let $A$ be co-r.e.\ and a nonempty cylinder.
Let $s_0, s_1, s_2, ...$ enumerate $\sigmastar$ in lexicographical order.
Let $E$ be an enumerating Turing machine 
without repetitions
for  $\overline{A}$.
Define $L_A = \{\pair{x,\epsilon} \condition x \in A\} \cup \{\pair{x,s_i} \condition i \geq 1 \,\,\land\,\, x$ is the $i^{\rm{}th}$ string enumerated by $E\}$.
Then $L_A$ is $\frec$\tcomp\ by projection onto the first coordinate.

We claim that $A$ is recursively isomorphic to $L_A$. Why?

Clearly $A \leq_1 L_A$ by mapping $x$ to $\pair{x,\epsilon}$.
Now let us show that 
$L_A \leq_1 A$.
Fix strings $x_0 \in A$ and $x_1 \not \in A$. Then the following gives a many-one reduction from $L_A$ to $A$:
\begin{quotation}
\noindent On input $\pair{x,s_i}$, if $s_i = \epsilon$, output $x$. Otherwise, check if $x$ is the $i^{\rm{}th}$ string enumerated by $E$. If so, output $x_0$. Otherwise, output $x_1$.
\end{quotation}
So $L_A \leq_m A$. Since $A$ is a cylinder, 
it follows 
(by~\cite[p.~89]{rog:b:rft}) that 
$L_A \leq_1 A$. 
Since 
$A \leq_1 L_A$ and 
$L_A \leq_1 A$, by 
the Myhill Isomorphism Theorem $L_A$ is recursively isomorphic to $A$, so 
by Proposition~\ref{p:isomorphisms} $A$ is $\frec$\tcomp.
\end{proof}

Note that the set 
$\overline{\K}_{\text{cyl}} =_{\text{def}}
\{\pair{a,b} \condition a\in \overline{\K} \land
b\in\sigmastar\}$ 
is clearly a
coRE-complete cylinder (since it is trivially 
recursively isomorphic to the two-dimensional
set $\overline{\K}\times \sigmastar$ via 
$\pair{a,b} \mapsto (a,b)$).
So by 
Theorem~\ref{t:dan-160522-Beta1}
we have that that coRE-complete set is 
$\frec$\tcomp.  But since 
$\frec$\tcomp\ is, as Proposition~\ref{p:isomorphisms} notes, 
clearly closed under recursive 
isomorphisms (as already has been analogously noted 
before for the case of polynomial-time 
compressibility and polynomial-time 
isomorphisms~\cite{gol-hem-kun:j:address}), 
and since by Myhill's Corollary all coRE-complete sets are 
recursively isomorphic to 
$\overline{\K}_{\text{cyl}}$,
we have that 
all coRE-complete sets are 
$\frec$\tcomp.  We summarize this as the following corollary, 
whose claim appeared without proof in~\cite{gol-hem-kun:j:address}.

\begin{corollary}[Stated without proof in~\cite{gol-hem-kun:j:address}]
\label{c:4.1/4.2-dan}
All $\core$-complete sets are 
$\frec$\tcomp.  
\end{corollary}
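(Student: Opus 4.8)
The plan is to obtain this as a direct consequence of Theorem~\ref{t:dan-160522-Beta1}, combined with Myhill's Corollary and Proposition~\ref{p:isomorphisms}. The strategy is: pin down one concrete coRE-complete set that is also a cylinder, compress that one via Theorem~\ref{t:dan-160522-Beta1}, and then spread $\frec$-compressibility to all coRE-complete sets using the facts that they are all recursively isomorphic to one another and that $\frec$\tcomp\ is closed under recursive isomorphisms.

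First I would verify that $\overline{\K}_{\text{cyl}} = \{\pair{a,b} \condition a \in \overline{\K} \land b \in \sigmastar\}$ is a coRE-complete cylinder. It is a cylinder because $\pair{a,b}\mapsto(a,b)$ is a recursive isomorphism witnessing $\overline{\K}_{\text{cyl}} \equiv_{\textit{iso}} \overline{\K}\times\sigmastar$. It lies in $\core$ because membership $\pair{a,b}\in\overline{\K}_{\text{cyl}}$ is equivalent to the co-r.e.\ condition $a\in\overline{\K}$, and it is coRE-hard because $\overline{\K}$ (which is coRE-complete, since $\K$ is RE-complete) many-one reduces to it via $a\mapsto\pair{a,\epsilon}$. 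Hence $\overline{\K}_{\text{cyl}}$ is a nonempty $\core$ cylinder, so Theorem~\ref{t:dan-160522-Beta1} applies and tells us it is $\frec$\tcomp.

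Then, for an arbitrary coRE-complete set $S$, Myhill's Corollary (as recalled in Section~\ref{s:definitions}) gives $S \equiv_{\textit{iso}} \overline{\K}_{\text{cyl}}$, and Proposition~\ref{p:isomorphisms} transfers $\frec$-compressibility from $\overline{\K}_{\text{cyl}}$ to $S$. That finishes the argument. I do not expect any substantive obstacle: this is a corollary in the strongest sense, and the only care needed is the routine check that the chosen witness set is simultaneously coRE-complete and a cylinder. (If one preferred to sidestep Theorem~\ref{t:dan-160522-Beta1} altogether, the alternative is the direct construction pairing each $x\in\overline{\K}$ with the exact step count at which $M_x(x)$ accepts, which yields a coRE-complete set on which projection onto the first coordinate is an $\frec$ compression function; but the isomorphism route is cleaner given the tools already developed.)
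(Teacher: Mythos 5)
Your proposal is correct and matches the paper's own argument essentially verbatim: the paper also takes $\overline{\K}_{\text{cyl}}=\{\pair{a,b}\condition a\in\overline{\K} \land b\in\sigmastar\}$ as the witness coRE-complete cylinder, applies Theorem~\ref{t:dan-160522-Beta1} to it, and then spreads compressibility to all coRE-complete sets via Myhill's Corollary and Proposition~\ref{p:isomorphisms}. Even your parenthetical alternative (pairing each $x$ with the halting step count) is the paper's footnoted direct proof.
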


Fix a standard, nice indexing 
(naming scheme)---$\phi_1,\phi_2,\phi_3,\ldots$---for
the partial recursive functions.
A set $A$ is an \emph{index set} exactly if 
there exists a (possibly empty) 
collection $\calf'$ of partial recursive functions 
such that 
$A = \{i \condition \phi_i \in \calf'\}$.
(Since all our sets are over $\sigmastar$, we are implicitly
associating the $i$th positive natural number with the $i$th 
string in $\sigmastar$, so that our index sets are type-correct.)

Since all index sets 
are cylinders (see~\cite[p.~23]{soa:b:degrees}; \cite{soa:b:degrees}'s 
definition of cylinders is well-known---see
\cite[Theorem~VIII(c)]{rog:b:rft}---to be equivalent to the definition
given in our Section~\ref{s:definitions}),
and thus all co-r.e.\ index sets are co-r.e.\ cylinders,
Theorem~\ref{t:dan-160522-Beta1} implies that 
all co-r.e.\ index sets except (the finite, and thus
not compressible index set) $\emptyset$ are $\frec$\tcomp.
\begin{corollary}\label{c:dan-160522-Beta2}
All coRE index sets except $\emptyset$ are 
$\frec$\tcomp.  
\end{corollary}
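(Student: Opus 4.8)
The plan is to derive Corollary~\ref{c:dan-160522-Beta2} as an immediate consequence of Theorem~\ref{t:dan-160522-Beta1}, by reducing the claim about coRE index sets to the already-established claim about coRE cylinders. The single nontrivial input needed is the classical fact that every index set is a cylinder; once that is in hand, the corollary is pure bookkeeping.

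First I would recall the definition: a coRE index set is a set $A = \{i \condition \phi_i \in \calf'\}$ for some collection $\calf'$ of partial recursive functions, with $A\in\core$. Next I would invoke the standard result that every index set is a cylinder (this is exactly the fact cited just before the corollary in the excerpt, namely that Soare's definition of cylinder applies to all index sets and is equivalent to the $A\equiv_{\textit{iso}}B\times\sigmastar$ definition used in Section~\ref{s:definitions}). Combining these two observations, any coRE index set is a coRE set that is also a cylinder---i.e., a coRE cylinder in the sense of Theorem~\ref{t:dan-160522-Beta1}. Then I would apply Theorem~\ref{t:dan-160522-Beta1}, which says that every coRE cylinder except $\emptyset$ is $\frec$\tcomp, to conclude that every coRE index set except $\emptyset$ is $\frec$\tcomp. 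Finally, I would note that $\emptyset$ must genuinely be excluded: it is a coRE index set (take $\calf'=\emptyset$), but being finite it has no compression function at all, so the exclusion is not a defect but a necessity matching the pattern already seen in Theorem~\ref{t:dan-160522-Beta1} and in the basic observations about finite sets following Definition~\ref{d:compressible}.

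There is essentially no obstacle here: the corollary is a one-line deduction from Theorem~\ref{t:dan-160522-Beta1} plus the folklore ``index sets are cylinders'' lemma, both of which I may assume. The only point requiring a moment's care is the type-correctness convention---index sets live over the natural numbers whereas all sets in this paper live over $\sigmastar$---but the excerpt has already fixed the standard identification of the $i$th positive natural number with the $i$th string in $\sigmastar$, so nothing further needs to be said. I would expect the write-up to be three or four sentences long.
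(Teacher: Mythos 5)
Your proposal is correct and follows exactly the paper's own route: the paper likewise observes that all index sets are cylinders (citing Soare/Rogers for that fact and the equivalence of the two cylinder definitions), concludes that every coRE index set is a coRE cylinder, and applies Theorem~\ref{t:dan-160522-Beta1}, excluding $\emptyset$ as the lone finite (hence incompressible) index set. Nothing is missing.
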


On the other hand, 
by diagonalization 
we can 
build a set, even one recursive in $\K$,
that is not compressible even by any $\fre$ function.

\begin{theorem}\label{t:151.2(b)-dan}
$\deltatwozero \not\subseteq \fre\tcomp'$.
\end{theorem}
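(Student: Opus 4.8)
The plan is to exhibit a cohesive set that lies in $\deltatwozero$ and to show that no cohesive set is $\fre$-compressible; together these give $\deltatwozero \not\subseteq \fre\tcomp'$. Recall that a set $C$ is \emph{cohesive} if $C$ is infinite and, for every r.e.\ set $W$, at least one of $C \cap W$ and $C \cap \overline{W}$ is finite. For the existence half, I would let $A = \overline{M}$, where $M$ is a maximal set --- maximal sets exist by Friedberg's classical theorem, and the defining property of maximality is exactly that $\overline{M}$ is an infinite cohesive set. Moreover $A = \overline{M}$ is co-r.e., so $A \in \core$ and hence $A \in \deltatwozero$. It then remains only to see that $A$ is not $\fre$-compressible; and since $A$ is infinite, that gives $A \notin \fre\tcomp'$.

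The core claim is that no cohesive set $C$ is $\fre$-compressible. Suppose toward a contradiction that $\phi$ is a compression function for $C$. Then $C \subseteq \domain(\phi)$, $\phi$ is injective on $C$, and $\phi(C) = \sigmastar$, so $\phi$ restricted to $C$ is a bijection from $C$ onto $\sigmastar$. Fix any recursive set $V$ with both $V$ and $\overline{V}$ infinite --- for instance $V = \{x \in \sigmastar : |x| \text{ is even}\}$ --- and set $W = \{x : \phi(x) \text{ is defined and } \phi(x) \in V\}$. Then $W$ is r.e.: dovetail the computation of $\phi$ over all inputs, and whenever $\phi(x)$ halts, recursively test whether its output lies in $V$ and, if so, enumerate $x$ into $W$. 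Since every element of $C$ lies in $\domain(\phi)$, we have $C \cap W = \{x \in C : \phi(x) \in V\}$ and $C \cap \overline{W} = \{x \in C : \phi(x) \notin V\}$, which are the preimages of $V$ and of $\overline{V}$, respectively, under $\phi$ restricted to $C$. As that restriction is a bijection onto $\sigmastar$ and both $V$ and $\overline{V}$ are infinite, both $C \cap W$ and $C \cap \overline{W}$ are infinite. This contradicts the cohesiveness of $C$, so no such $\phi$ exists.

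I do not anticipate a real difficulty; the only external input is the classical, diagonalization-based existence of maximal sets. The points needing a careful sentence in the final writeup are that $C \cap W$ and $C \cap \overline{W}$ are precisely the two preimages just named --- which uses $C \subseteq \domain(\phi)$, so that $\phi$ is total on $C$ --- together with the elementary fact that a bijection sends infinite sets to infinite sets, so that the preimages of the infinite sets $V$ and $\overline{V}$ are infinite.
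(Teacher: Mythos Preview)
Your argument is correct, and it takes a genuinely different route from the paper's. The paper builds its $\Delta_2^0$ set by an explicit stage-by-stage diagonalization recursive in $\K$: at stage $i$ it either forces $\phi_i$ to be noninjective on the set being built or freezes a string out so as to keep some value permanently out of $\phi_i(A)$. Your approach instead appeals to the classical existence of maximal sets and then gives a clean, short structural argument that cohesiveness obstructs $\fre$-compression via the r.e.\ set $W = \phi^{-1}(V)$. Both arguments are sound; the paper's is self-contained, whereas yours invokes Friedberg's theorem as a black box.

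One point worth emphasizing: your proof actually establishes more than the theorem asks. Since $\overline{M}$ is co-r.e.\ (not merely $\Delta_2^0$), you have shown $\core \not\subseteq \fre\tcomp'$. The paper's construction does not achieve this, and in fact the paper explicitly lists the existence of an infinite co-r.e.\ set outside $\fre\tcomp$ as an open conjecture in its conclusions section. So your maximal-set approach buys a strictly stronger conclusion with less work.
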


\begin{proof}%
Fix a standard enumeration of Turing machines, $M_1, M_2, M_3, ...$,
with each machine viewed as 
computing a partial recursive function.
$\phi_i$ will denote the partial recursive function computed by $M_i$.
We will explicitly construct a set $A$ that belongs 
to $\Delta_2^0$ but that is not $\fre$\tcomp.
This will be done by a stage construction.
At stage $i$, we will define a set $A_i$ and a string $w_i$.
$A$ will be defined as $\bigcup_{i \geq 0} A_i$.
We will ensure that $A_i \subseteq A_{i+1}$ and $A^{< w_i} = A_i^{< w_i}$.
That is, after stage $i$, all strings lexicographically preceding $w_i$ 
will be fixed---their membership/nonmembership in $A$ will not 
be changed by later stages.
At each stage $i$, $i\geq 1$,
at least one string will be added to $A$, in order to 
ensure that $A$ ultimately becomes infinite,
and $\phi_i$ will be eliminated as a 
$\fre$-compressor for $A$.

We start by setting $A_0$ to 
be 
$\emptyset$ and 
$w_0$ to be $\epsilon$.  We then do stage 1, then stage 2, etc.

At stage $i$, $i \geq 1$, first check whether $\phi_i$ 
is injective when restricted to $A_{i-1} \cup (\sigmastar)^{\geq w_{i-1}}$. 
This is an r.e.~condition, since we are looking for a pair $(x,y)$ 
with $x,y \in \domain(\phi_i)$, $x  \neq y$, $x,y \in A_{i-1} 
\cup (\sigmastar)^{\geq w_{i-1}}$, and $\phi_i(x) = \phi_i(y)$.
If such a pair exists (which our $\deltatwozero$ process can easily test),
then 
we set $A_i$ to be $A_{i-1} \cup \{x,y,w_{i-1}\}$, we set $w_i$
to be $\nextstring(\max(x,y,w_{i-1}))$, and we go to stage $i+1$.
(We added $w_{i-1}$ to ensure that we always add a string---even 
in the case that $x,y \in A_{i-1}$.)
Since 
$x,y \in A_i$ (and thus $x,y \in A$)
and $x \neq y$, we 
have ensured that for two strings in $A$, namely $x$ and $y$, 
$\phi_i$ maps to the same output.  So $\phi_i$ has been eliminated 
as a 
potential 
compressor for $A$.

However, if we cannot find such a pair $(x,y)$, then we will 
freeze a string out of $A$ in such a way as to permanently ensure that 
$\phi_i$ is not surjective.
In particular, 
check whether $\phi_i((\sigmastar)^{\geq w_{i-1}}) \neq \emptyset$. 
This is again an r.e.~test.  If the test determines that 
$\phi_i((\sigmastar)^{\geq w_{i-1}}) \neq \emptyset$, then 
there is an $x \in (\sigmastar)^{\geq w_{i-1}}$ such that $\phi_i(x)$ 
is defined.
Let $x$ be the lexicographically smallest string 
in 
$(\sigmastar)^{\geq w_{i-1}}$
such that this holds. Such an $x$ can 
be found by further r.e.~queries (within our $\deltatwozero$ process).
Then set $A_i$ to be $A_{i-1} \cup \{\nextstring(x)\}$, 
set $w_i$ to be $\nextstring(\nextstring(x))$, and go to stage $i+1$.
Note that we have ensured that 
$\phi_i(x) \not \in \phi(A)$ (the reason we know that no string in 
$A_{i-1}\cup (\sigmastar)^{\geq w_{i-1}}$ can map to $\phi_i(x)$
is that if so we \emph{would} have had a pair $(x,y)$ of the 
sort sought above), 
and so $\phi_i(A) \neq \sigmastar$, and 
so 
we have ensured that $\phi_i$ is not a compressor for $A$.

In the last remaining case, we must 
have $\phi_i((\sigmastar)^{\geq w_{i-1}}) = \emptyset$.
So $\phi_i$ is only defined for finitely many strings, and thus 
cannot be a compressor function for $A$. So set $A_i$ to 
$A_{i-1} \cup \{w_{i-1}\}$, set $w_i$ to $\nextstring(w_{i-1})$,
and go to stage $i+1$.

Note that in all three cases, $A_i$ has at least one more element than
$A_{i-1}$, so $A$ will be infinite. And at stage $i$, we ensured that 
$\phi_i$ will not be a compressor for $A$. So $A$ is not $\fre$\tcomp,
since no partial recursive function is a compressor for it.
\end{proof}

\section{Conclusions and Open Problems}\label{s:open}
This paper defined and studied the recursion-theoretic analogues of
the complexity-theoretic notions of ranking and compression.  We
particularly sought to determine where rankable and compressible sets
could be found.  For example, we found that all coRE-complete sets are
recursively compressible but no RE-complete set is recursively
compressible, and
that no RE-complete or coRE-complete set is recursively (or even
partial-recursively) rankable.  Nonetheless, we showed that every
1-truth-table degree---even the one containing the RE-complete and the
coRE-complete sets---contains recursively rankable sets.  And we also
showed that every nonempty coRE cylinder is recursively compressible.

We mention some open issues that we commend to the interested reader.
We conjecture that there exist
infinite, co-r.e.\ sets that are not 
partial-recursively
compressible, although 
this paper 
establishes that only for the 
larger class $\deltatwozero$.  
Also, can one construct a set that is
$\fre$\trank\ but not $\frec$\trank, and if so, what is the smallest
class in which such a set can be constructed?  Note that by
Theorems~\ref{t:123.1} and~\ref{t:169.1}, separating 
$\fre$\trank\ from $\frec$\trank\ on any set in $\re \cup \core$ 
is impossible.
Some additional research 
directions and ideas for work 
building on the notions of 
the present paper 
can be found in~\cite{hem-rub:t3:rft-compression2}.


\begin{thebibliography}{FFNR03}

\bibitem[{\'{A}}J93]{alv-jen:j:logcount}
C.~{\'{A}}lvarez and B.~Jenner.
\newblock A very hard log-space counting class.
\newblock {\em Theoretical Computer Science}, 107:3--30, 1993.

\bibitem[All85]{all:thesis:invertible}
E.~Allender.
\newblock Invertible functions, 1985.
\newblock PhD thesis, Georgia Institute of Technology.

\bibitem[Ber76]{ber:c:complete-sets}
L.~Berman.
\newblock On the structure of complete sets.
\newblock In {\em Proceedings of the 17th IEEE Symposium on Foundations of
  Computer Science}, pages 76--80. IEEE Computer Society, October 1976.

\bibitem[BFH78]{bra-for-hop:t:one-way}
G.~Brassard, S.~Fortune, and J.~Hopcroft.
\newblock A note on cryptography and {$\rm {NP} \cap {coNP} -{P}$}.
\newblock Technical Report TR-338, Department of Computer Science, Cornell
  University, Ithaca, NY, April 1978.

\bibitem[BS85]{bal-sch:j:immune}
J.~Balc{\'{a}}zar and U.~Sch{\"{o}}ning.
\newblock Bi-immune sets for complexity classes.
\newblock {\em Mathematical Systems Theory}, 18(1):1--10, 1985.

\bibitem[Dek62]{dek:c:isols-regressive}
J.~Dekker.
\newblock Infinite series of isols.
\newblock In {\em Proceedings of the 5th Symposium in Pure Mathematics}, pages
  77--96. American Mathematical Society, 1962.

\bibitem[DM58]{dek-myh:j:retraceable}
J.~Dekker and J.~Myhill.
\newblock Retraceable sets.
\newblock {\em Canadian Journal of Mathematics}, 10:357--373, 1958.

\bibitem[FFNR03]{fen-for-nai-rog:j:inverse}
S.~Fenner, L.~Fortnow, A.~Naik, and J.~Rogers.
\newblock Inverting onto functions.
\newblock {\em Information and Computation}, 186(1):90--103, 2003.

\bibitem[GH96]{gol-hom:j:scalability}
J.~Goldsmith and S.~Homer.
\newblock Scalability and the isomorphism problem.
\newblock {\em Information Processing Letters}, 57(3):137--143, 1996.

\bibitem[GHK92]{gol-hem-kun:j:address}
J.~Goldsmith, L.~Hemachandra, and K.~Kunen.
\newblock Polynomial-time compression.
\newblock {\em Computational Complexity}, 2(1):18--39, 1992.

\bibitem[GS85]{gol-sip:cOutByJour:compression}
A.~Goldberg and M.~Sipser.
\newblock Compression and ranking.
\newblock In {\em Proceedings of the 17th ACM Symposium on Theory of
  Computing}, pages 440--448. ACM Press, May 1985.

\bibitem[GS91]{gol-sip:j:compression}
A.~Goldberg and M.~Sipser.
\newblock Compression and ranking.
\newblock {\em SIAM Journal on Computing}, 20(3):524--536, 1991.

\bibitem[HM83]{hom-maa:j:oracle-lattice}
S.~Homer and W.~Maass.
\newblock Oracle dependent properties of the lattice of {NP} sets.
\newblock {\em Theoretical Computer Science}, 24(3):279--289, 1983.

\bibitem[Hom86]{hom:j:simple}
S.~Homer.
\newblock On simple and creative sets in {NP}.
\newblock {\em Theoretical Computer Science}, 47(2):169--180, 1986.

\bibitem[HOZZ06]{hem-ogi-zak-zim:j:psr-weak-P-rank}
L.~Hemaspaandra, M.~Ogihara, M.~Zaki, and M.~Zimand.
\newblock The complexity of finding top-{Toda}-equivalence-class members.
\newblock {\em Theory of Computing Systems}, 39(5):669--684, 2006.

\bibitem[HR90]{hem-rud:j:ranking}
L.~Hemachandra and S.~Rudich.
\newblock On the complexity of ranking.
\newblock {\em Journal of Computer and System Sciences}, 41(2):251--271, 1990.

\bibitem[HR16]{hem-rub:t3:rft-compression2}
L.~Hemaspaandra and D.~Rubery.
\newblock More on compression and ranking.
\newblock Technical Report arXiv:1611.01696~[cs.LO], Computing Research
  Repository, \mbox{arXiv.org/corr/}, November 2016.
\newblock Revised, December 2017.

\bibitem[HRW97]{hem-rot-wec:j:hard-certificates}
L.~Hemaspaandra, J.~Rothe, and G.~Wechsung.
\newblock Easy sets and hard certificate schemes.
\newblock {\em Acta Informatica}, 34(11):859--879, 1997.

\bibitem[Huy90]{huy:j:rank}
D.~Huynh.
\newblock The complexity of ranking simple languages.
\newblock {\em Mathematical Systems Theory}, 23(1):1--20, 1990.

\bibitem[Kle43]{kle:j:arithmetical}
S.~Kleene.
\newblock Recursive predicates and quantifiers.
\newblock {\em Transactions of the AMS}, 53:41--73, 1943.

\bibitem[LLS75]{lad-lyn-sel:j:com}
R.~Ladner, N.~Lynch, and A.~Selman.
\newblock A comparison of polynomial time reducibilities.
\newblock {\em Theoretical Computer Science}, 1(2):103--124, 1975.

\bibitem[Mos47]{mos:j:arithmetical}
A.~Mostowski.
\newblock On definable sets of positive integers.
\newblock {\em Fundamenta Mathematicae}, 34:81--112, 1947.

\bibitem[MS72]{mey-sto:c:reg-exp-needs-exp-space}
A.~Meyer and L.~Stockmeyer.
\newblock The equivalence problem for regular expressions with squaring
  requires exponential space.
\newblock In {\em Proceedings of the 13th IEEE Symposium on Switching and
  Automata Theory}, pages 125--129. IEEE Press, October 1972.

\bibitem[Myh55]{myh:j:creative}
J.~Myhill.
\newblock Creative sets.
\newblock {\em Zeitschrift {f{\"u}r} Mathematische Logik und Grundlagen der
  Mathematik}, 1:97--108, 1955.

\bibitem[Odi89]{odi:b:classical-recursion-theory}
P.~Odifreddi.
\newblock {\em Classical Recursion Theory}.
\newblock North-Holland/Elsevier, 1989.

\bibitem[Pos44]{pos:j:re}
E.~Post.
\newblock Recursively enumerable sets of integers and their decision problems.
\newblock {\em Bulletin of the AMS}, 50:284--316, 1944.

\bibitem[Rog67]{rog:b:rft}
H.~Rogers, Jr.
\newblock {\em The Theory of Recursive Functions and Effective Computability}.
\newblock McGraw-Hill, 1967.

\bibitem[Rot99]{rot:thesis-habilitation:certificates}
J.~Rothe.
\newblock Complexity of certificates, heuristics, and counting types, with
  applications to cryptography and circuit theory.
\newblock Habilitation thesis, Friedrich-Schiller-Universit\"{a}t Jena,
  Institut f{\"{u}}r Informatik, Jena, Germany, June 1999.

\bibitem[Sel82]{sel:j:ana}
A.~Selman.
\newblock Analogues of semirecursive sets and effective reducibilities to the
  study of {N}{P} complexity.
\newblock {\em Information and Control}, 52(1):36--51, 1982.

\bibitem[Soa87]{soa:b:degrees}
R.~Soare.
\newblock {\em Recursively Enumerable Sets and Degrees: {A} Study of Computable
  Functions and Computably Generated Sets}.
\newblock Perspectives in Mathematical Logic. Springer-Verlag, 1987.

\bibitem[Sto76]{sto:j:poly}
L.~Stockmeyer.
\newblock The polynomial-time hierarchy.
\newblock {\em Theoretical Computer Science}, 3(1):1--22, 1976.

\bibitem[Tor86]{tor:thesis:relativized-hierarchies}
L.~Torenvliet.
\newblock Structural concepts in relativized hierarchies, 1986.
\newblock PhD thesis, Universiteit van Amsterdam.

\end{thebibliography}

%

%

%
\appendix

\section{%
Deferred Proof from Section~\ref{s:definitions}}\label{sapp:defs}
In Footnote~\ref{f:other-ranking}, in
Section~\ref{s:definitions},
we claimed that under both the
``a'' and ``b'' variants of ranking mentioned in that footnote, and
for each of those under both $\frec$ and $\fre$ ranking functions, the
class of sets thus ranked is exactly the recursive sets.  We now prove
that.  It is immediately obvious that each recursive set is $\frec$
(the more restrictive of the two function classes) rankable even under
the ``b'' variant, which is the more restrictive of the two variants.
So all that remains is to show that each set that is 
$\fre$ rankable
under the ``a'' variant is recursive.  Let $A$ be a set that is 
$\fre$ rankable
under the ``a'' variant.  
If $A$ is finite, then trivially $A\in\rec$.
So let us consider the case where $A$ is infinite.
Let $f$ be an $\fre$ ranking function for $A$
of the variant 
``a'' sort.  
Let us quickly make clear what variant ``a'' means,
especially in the context of $\fre$ functions.
If on an input $f$ halts in an accepting 
state we view 
the string that is at that moment on its output tape 
(namely, from the left end of the output tape
up to but not including the leftmost blank cell) as the output of 
$f$, and if $f$ halts in a rejecting state we view it as
stating that the input is not in the set.  
On inputs $x \in A$, $f$ must output the string whose rank order
within $\sigmastar$ is the same as the rank order of $x$ within $A$.
On inputs $x \not\in A$, $f$ can either halt in a rejecting state or 
run forever (but it cannot halt in an accepting state, i.e., it cannot
output some string; this contrasts with
Definition~\ref{d:rankable}, which allows $f$ to even ``lie'' 
on inputs $x\not\in A$).
Here is the description of a procedure, which halts on every
input,
for testing whether $x\in A$.  In a standard dovetailing manner
(i.e., interleaved, e.g., running on the first string in $\sigmastar$ 
for one step,
then running on the first two strings in $\sigmastar$ 
for two steps each, then running
on the first three strings in $\sigmastar$ for three steps each, and so on),
run $f$ on every string in $\sigmastar$.  If $f(x)$ is ever 
computed in that process, we reject $x$ if $f(x)$ declares that $x$ 
is not in $A$ (recall that as noted above in variant ``a'' the ranker 
can declare the string to not be in $A$, in particular by 
halting in a rejecting 
state), and we accept $x$ otherwise.  Also, as the process goes on, if any 
string $y$ such that $y \glex x$ evaluates to the lexicographically 
first string in $\sigmastar$, namely $\epsilon$, then we reject $x$.
Also, as the process goes on, if for some strings $w$
and $y$ 
with $w \llex x \llex y$ and such that $f(y)$ and $f(w)$ have 
both evaluated, it holds that $f(y)$ evaluates to 
the lexicographical
successor of $f(w)$, then reject $x$.  

At least one of these cases must eventually occur.  Why?
If $x \in A$, eventually, $f$ will compute $f(x)$ and we will correctly
accept.
If $x \not\in A$, then there are two cases.  If $x$ is lexicographically
strictly less than the lexicographically first 
string $z$ in $A$, then eventually we will evaluate $f(z)$ to be 
$\epsilon$ and will correctly reject $x$.  Otherwise, eventually the 
strings in $A$ that are most closely lexicographically greater than 
(recall that we are here handling the case that $A$ is infinite,
so such a string must exist)
and less than $x$ 
will evaluate under $f$, at which point we will correctly reject $x$.

\end{document}